\def\cO{\mathcal{O}}
\def\cC{\mathcal{C}}
\def\cH{\mathcal{H}}
\def\cE{\mathcal{E}}
\def\cP{\mathcal{P}}
\def\cT{\mathcal{T}}
\def\cP{\mathcal{P}}
\def\cL{\mathcal{L}}
\def\cF{\mathcal{F}}
\def\dom{{\sc Dom-Enum}\xspace}
\def\np{{\sc NP}\xspace}
\def\extensionproblem{\textsf{\sc Extension Problem}\xspace}
\def\sat{{\sc Sat}\xspace}
\newcommand{\eg}{\emph{e.g.}\xspace}
\newcommand{\ie}{\emph{i.e.}\xspace}
\newcommand{\macro}[3]{\newcommand{#1}[#3]{#2}}
\macro{\size}{\|#1\|}{1}
\newtheorem{thm}{Theorem}
\newtheorem{lem}[thm]{Lemma}
\newtheorem{prop}[thm]{Proposition}
\newtheorem{property}[thm]{Property}
\newtheorem{fact}[thm]{Fact}
\newtheorem{theorem}[thm]{Theorem}
\newtheorem{lemma}[thm]{Lemma}
\title[Poly Delay for \dom in Chordal Graphs]{A Polynomial Delay Algorithm for Enumerating Minimal Dominating Sets in Chordal Graphs}
\author[M.M. Kant{\'e} \and V. Limouzy\and A. Mary \and L. Nourine \and T. Uno]{Mamadou Moustapha Kant{\'e} \and Vincent Limouzy \and Arnaud
  Mary \and Lhouari Nourine \and Takeaki Uno}
\address{Clermont-Universit{\'e}, Universit{\'e} Blaise Pascal, LIMOS,
  CNRS, France} 
\email{\{mamadou.kante,limouzy,mary,nourine\}@isima.fr}
\address{National Institute of Informatics, Japan}
\email{uno@nii.jp}
\thanks{M.M. Kant{\'e} and V. Limouzy are supported by the French Agency
  for Research under the DORSO project.}
\begin{document}

\begin{abstract} An output-polynomial algorithm for the listing of minimal dominating sets in graphs is a challenging open problem and is known to be equivalent to the well-known Transversal
  problem which asks for an output-polynomial algorithm for listing the set of minimal hitting sets in hypergraphs. We give a polynomial delay algorithm to list the set of minimal dominating
  sets in chordal graphs, an important and well-studied graph class where such an algorithm was open for a while. 
\end{abstract}

\maketitle

\section{Introduction} \label{sec:0} 

An \emph{enumeration algorithm} for a set $\mathscr{C}$ is an algorithm that lists the elements of $\mathscr{C}$ without repetitions.  A \emph{hypergraph} $\cH$ is a pair $(V,\cE)$ where $V$ is a
finite set and $\cE\subseteq 2^V$ is called the set of \emph{hyper-edges}. Hypergraphs generalize graphs where each hyper-edge has size at most $2$. Given a hypergraph $\cH:=(V,\cE)$ and $\cC\subseteq
2^V$ an \emph{output-polynomial algorithm} for $\cC$ is an enumeration algorithm for $\cC$ whose running time is bounded by a polynomial depending on the sum of the sizes of $\cH$ and $\cC$. The
enumeration of minimal or maximal subsets of vertices satisfying some property in a (hyper)graph is a central area in graph algorithms and for several properties output-polynomial algorithms have
been proposed \eg \cite{AvisF96,BorosEG06,EGM03,FominHKPV14,lawler80,SchwikowskiS02,Tarjan73}, while for others it was proved that no output-polynomial algorithm exists unless P=NP
\cite{KhachiyanBBEG08b,KhachiyanBBEGM08a,KhachiyanBEG08,lawler80,Strozecki10}.

One of the central problem in the area of enumeration algorithm is the existence of an output-polynomial algorithm for the set of \emph{minimal transversals} in hypergraphs, and is known as the
\emph{Transversal problem} or \emph{Hypergraph dualization}.  A \emph{minimal transversal} (or \emph{hitting set}) in a hypergraph $(V,\cE)$ is an inclusion-wise minimal subset $T$ of $V$ that intersects with every hyper-edge in $\cE$. The
Transversal problem has several applications in artificial intelligence \cite{EG95,EGM03}, game theory \cite{Gurvich73,ramamurthy90}, databases \cite{Agrawal96,BG00,BorosGKM00,ullman89}, integer
linear programming \cite{BG00,BorosGKM00}, to cite few. Despite the interest in the Transversal problem the best known algorithm is the quasi-polynomial time
algorithm by Fredman and Khachiyan which runs in time $O(N^{\log(N)})$ where $N$ is the cumulated size of the given hypergraph and its set of minimal transversals. However, there exist several classes
of hypergraphs where an output-polynomial algorithm is known (see for instance \cite{EG95,EGM03,KanteLMN12} for some examples). Moreover, several particular subsets of vertices in graphs are special
cases of transversals in hypergraphs and for some of them an output-polynomial algorithm is known, \eg maximal independent sets, minimal vertex-covers, maximal (perfect) matchings,
spanning trees, etc.

In this paper we are interested in the particular case of the Transversal problem, namely the enumeration of \emph{minimal dominating sets} in graphs (\dom problem). A \emph{minimal dominating set} in
a graph is an inclusion-wise subset $D$ of the vertex set such that every vertex is either in $D$ or has a neighbor in $D$. In other words $D$ is a minimal dominating set of $G$ if it is a minimal
transversal of the \emph{closed neighborhoods} of $G$. The \emph{closed neighborhood $N[x]$ of a vertex $x$} is the set containing $x$ and its neighbors. Since in important graph classes an
output-polynomial algorithm for the \dom problem is a direct consequence of already tractable cases for the Transversal problem, \eg minor-closed classes of graphs, graphs of bounded degree, it is
natural to ask whether an output-polynomial algorithm exists for the \dom problem. However, it is proved in \cite{KanteLMN12} that there exists an output-polynomial algorithm for the \dom problem if
and only if there exists one for the Transversal problem, and this remains true even if we restrict the \dom problem to the co-bipartite graphs.  This is surprising, but has the advantage of bringing
tools from graph structural theory to this difficult problem and is particularly true for the \dom problem since in several graph classes output-polynomial algorithms were obtained using the structure
of the graphs: graphs of bounded clique-width \cite{Courcelle09}, split graphs \cite{KanteLMN11,KanteLMN12}, interval and permutation graphs \cite{KanteLMNU13}, line graphs \cite{KLMN12,KanteLMNU14},
etc.

Since the \dom problem in co-bipartite graphs is as difficult as the Transversal problem and co-bipartite graphs are a subclass of weakly chordal graphs, \ie graphs with no cycles of length greater
than or equal to $5$, one can ask whether by restricting ourselves to graphs without cycles of length $4$, which are exactly \emph{chordal} graphs \cite{Dirac61}, one cannot expect an
output-polynomial algorithm. In fact for several subclasses of chordal graphs an output-polynomial algorithm is already known, \eg undirected path graphs \cite{KLMN12}, split graphs, chordal
$P_6$-free graphs \cite{KanteLMN11,KanteLMN12}. Furthermore, chordal graphs have a nice structure, namely the well-known \emph{clique tree} which has been used to solve several algorithmic questions
in chordal graphs.  We prove the following.

\begin{thm}\label{thm:main} There exists a polynomial delay algorithm for the \dom problem in chordal graphs which uses polynomial space.
\end{thm}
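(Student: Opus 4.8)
The plan is to enumerate directly along the clique-tree decomposition of the chordal graph rather than through the generic flashlight (binary-partition) scheme, since the latter would require solving the \extensionproblem for minimal dominating sets in polynomial time, and we expect this problem to be \np-complete even on chordal graphs (this hardness of extension is, after all, the reason the general \dom problem is as hard as the Transversal problem). So I would fix a clique tree $T$ of $G$, equivalently a perfect elimination ordering $v_1,\dots,v_n$, and root $T$, thereby ordering the maximal cliques from the leaves toward the root. Recall that a minimal dominating set $D$ is exactly a minimal transversal of the closed-neighborhood hypergraph $\cH=(V,\{N[x]\mid x\in V\})$, and that minimality is witnessed by \emph{private neighbors}: every $x\in D$ must have some $w$ (possibly $w=x$) with $N[w]\cap D=\{x\}$.

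I would process the maximal cliques in this rooted order and grow partial solutions on the already-seen part of the graph. The essential point to exploit is that in a chordal graph every minimal separator is a clique: once we cross a separator $S$, the influence of the processed side on the unprocessed side is channelled entirely through $S$, and because $S$ is a clique, a single vertex of $D$ lying in $S$ already dominates all of $S$. This collapses the domination information carried across $S$ and suggests describing each partial solution by a compact \emph{state} recording, for the current boundary $S$: which vertices of $S$ are in $D$, which are already dominated, and which $D$-vertices have already secured a private neighbor versus those whose private-neighbor obligation is still open and must be discharged later. The enumeration then branches over the admissible extensions inside the next clique, committing which new vertices enter $D$ and which pending obligations they fulfil.

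To convert this recursion into a polynomial-delay, polynomial-space enumeration I would use the standard backtracking-over-states template: establish a \emph{no-dead-end lemma} stating that every state reachable by the branching can be completed to at least one genuine minimal dominating set, so that no search branch is wasted and the delay is bounded by the tree height times the polynomial per-clique work; store only the current root-to-leaf path and the bounded state, giving polynomial space; and impose a canonical (lexicographic) tie-breaking on the branch choices inside each clique so that each minimal dominating set is produced along exactly one path, eliminating duplicates without a dictionary.

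The main obstacle, in my estimation, is twofold and concentrated in the interaction between minimality and the separators. First, one must keep the state genuinely polynomial: the naive state ``which subset of $S$ lies in $D$'' is exponential in $|S|$, so the real content of the argument is showing that, thanks to $S$ being a clique, the relevant information about pending private-neighbor obligations admits a compact encoding (only boundedly much data about unjustified $D$-vertices can matter across the separator, because two $D$-vertices in a clique mutually dominate and thus cannot be each other's private neighbor). Second, proving the no-dead-end lemma in polynomial time is exactly where the NP-hardness of the general extension problem must be circumvented by chordality; getting this completion argument right — so that deferred private-neighbor obligations are always dischargeable on the unprocessed side while never yielding a non-minimal or a missed set — is where I expect the bulk of the technical work to lie.
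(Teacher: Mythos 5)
There is a genuine gap, and it sits exactly where you yourself locate ``the bulk of the technical work'': the no-dead-end lemma. Your plan is a single sweep over the cliques of a rooted clique tree, maintaining on each separator a state recording which boundary vertices are in $D$, which are dominated, and which $D$-vertices still owe a private neighbor, and then invoking a polynomial-time completability check to prune dead branches. But that completability check \emph{is} the \extensionproblem for a partial solution that is determined on the processed side of the tree and free on the other side, and the paper proves (Theorem \ref{thm:np-chordal}) that this is \np\!\!-complete in chordal graphs even under a very restrictive structural hypothesis of precisely your flavour: the undetermined region hangs off a single root-to-leaf path, and every child clique of that path is either totally determined or totally untouched. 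The source of hardness identified there --- a clique simultaneously containing undominated vertices and several $D$-vertices with pending private-neighbor obligations, whose obligations compete for the same descendants --- is exactly the kind of state your sweep would have to certify as completable. So chordality and the clique-separator property do not, by themselves, rescue the check; you would need a further idea, and the proposal does not supply one. (Your compact-encoding worry is comparatively minor: the state is just part of the partial solution being enumerated, so its size is not the obstruction.)

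The paper escapes this trap by changing the decomposition rather than the bookkeeping. Instead of sweeping clique by clique, it partitions the minimal dominating sets by their \emph{top antichain} $A(D)$ (the vertices of $D$ lying in the $\preceq_\cT$-maximal cliques of $\cC(D)$), enumerates these antichains levelwise by a backtracking that only ever visits \emph{extendable partial antichains}, and then recurses independently into the subtrees below a fixed antichain with a pair $(K_1,K_2)$ recording which inherited vertices still need a private neighbor ($K_1$) versus which are already discharged ($K_2$). The crucial point that replaces your no-dead-end lemma is Lemma \ref{lem:extend-pa}: extendability of a partial antichain is characterized by purely local, polynomial-time checkable conditions built on the notion of a \emph{safe} private neighbor and the canonical sets $D_C(y)$, together with an explicit construction witnessing that every extendable partial antichain really does extend. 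It is this characterization --- not a general extension oracle --- that makes every branch of the search productive and yields polynomial delay and polynomial space. Without an analogue of it, your architecture does not go through.
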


An output-polynomial algorithm is \emph{polynomial delay} if the delay between two outputs is bounded by a polynomial in the size of the input (we also require the times before the first output and
after the last output to be bounded by polynomials on the input size). Among output-polynomial algorithms polynomial-delay algorithms are the most desirable since they allow to treat the solutions as
they appear and we do not need to wait a long time between two outputs.  Notice that there exist problems where an output-polynomial algorithm is known and no polynomial delay algorithm exists unless
P=NP \cite{Strozecki10}.

It is well-known that every $n$-vertex chordal graph $G$ admits a linear ordering $x_1,\ldots,x_n$ of its vertex such that for every $1\leq i \leq n$ the vertex neighborhood of $x_i$ in
$G[\{x_i,\ldots,x_n\}]$ is a clique. For the enumeration of minimal dominating sets in chordal graphs the simplest strategy consists in following this ordering as follows. Since $N[x_1]$ is a
clique, any minimal dominating set of $G$ either contains $x_1$ or does not contain $x_1$ but contains at least one of its neighbors. Therefore any minimal dominating set of $G$ is either of the form
$D\cup \{x_1\}$ where $D$ is a minimal dominating set of $G\setminus N[x_1]$, or is a minimal dominating set of $G\setminus \{x\}$ that intersects the neighborhood of $x_1$. Unfortunately if the
first case is just a recursive call, it is an exercice to see that the Transversal problem reduces to the enumeration of minimal dominating sets of the second kind. Indeed, such a bottom-up strategy
is hopeless since we will face the problem of identifying which sets in sub-trees are extendable to minimal dominating sets. An idea would be then to follow the clique tree in a top-down way, but as
we will see if we do not take care we will come across an \np\!\!-complete problem. Our strategy will nevertheless follow the clique tree in a top-down way, but not in the usual way combining a kind
of breadth-first search and depth-first search of the tree.  We postpone the details of the strategy to forthcoming sections.

\medskip
\paragraph*{\bf Summary.} Definitions and preliminary results are given in Section \ref{sec:preliminary}. The strategy and some faced difficulties are presented in Section \ref{sec:difficulty}.  The algorithm and some
necessary technical lemmas are given in Sections \ref{sec:kextension} and \ref{sec:algo}. We conclude with some open questions. 

\section{Preliminaries} \label{sec:preliminary}

\subsection{General Definitions and Notations} 

We refer to \cite{Diestel2005} for our graph terminology. We deal only with finite simple loopless undirected graphs. The vertex set of a graph $G$ is denoted by $V_G$ and its edge set by $E_G$. An
edge between two vertices $x$ and $y$ is denoted by $xy$ ($yx$ respectively).  Let $G$ be a graph. The subgraph of $G$ induced by $X\subseteq V_G$, denoted by $G[X]$ is the graph $(X,(X\times X)\cap
E_G)$. The size of a graph $G$, denoted by $\size{G}$, is $|V_G|+|E_G|$, and the \emph{size} of any $\cC\subseteq 2^{V_G}$, denoted by $\size{\cC}$, is defined as $\sum_{C\in \cC} |C|$. For a vertex
$x$ of $G$ we denote by $N(x)$ the set of neighbors of $x$, \ie the set $\{y\in V_G\mid xy\in E_G\}$, and we let $N[x]$, the \emph{closed neighborhood of $x$}, be $N(x)\cup \{x\}$. For
$S\subseteq V_G$, let $N[S]$ denote $\bigcup_{x\in S}N[x]$. (We will remove the subscript when the graph is clear from the context and this will be the case for all sub or superscripts in the
paper.)  We say that a vertex $x$ is \emph{dominated} by a vertex $y$ if $x\in N[y]$.  A \emph{dominating set} of $G$ is a subset $D$ of $V_G$ such that every vertex of $G$ is dominated by a vertex
in $D$.  A dominating set is \emph{minimal} if it includes no other dominating set. For $D\subseteq V_G$, a vertex $x$ is a \emph{private neighbor} of $y\in D$ if $N[x]\cap D=\{y\}$; the set of
private neighbors of a vertex $x\in D$ is denoted by $P(D,x)$. $D\subseteq V_G$ is an \emph{irredundant set} of $G$ if $P(D,x)\ne \emptyset$ for all $x\in D$. The following is easy to obtain.

\begin{fact}\label{fact:min-dom-irr} $D\subseteq V_G$ is a minimal dominating set of $G$ if and only if $D$ is a dominating set of $G$ and $D$ is an irredundant set. 
\end{fact}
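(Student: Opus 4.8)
The plan is to establish the equivalence by proving the two implications separately, and in each direction it is convenient to replace the global minimality condition by a purely local one. Indeed, since any superset of a dominating set is again a dominating set, a dominating set $D$ fails to be minimal if and only if some single vertex can be removed while preserving domination: if $D'\subsetneq D$ is dominating and $y\in D\setminus D'$, then $D\setminus\{y\}\supseteq D'$ is dominating as well. Hence it suffices to show that $D$ is irredundant if and only if $D\setminus\{y\}$ is not dominating for every $y\in D$.

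For the forward direction, I would assume $D$ is a minimal dominating set; it is a dominating set by definition, so only irredundance needs an argument. Fixing $y\in D$, minimality gives that $D\setminus\{y\}$ is not dominating, so there is a vertex $x$ with $N[x]\cap(D\setminus\{y\})=\emptyset$. Since $D$ does dominate $x$, we must have $N[x]\cap D\neq\emptyset$, and combining the two yields $N[x]\cap D=\{y\}$; this is exactly the statement that $x\in P(D,y)$. As $y\in D$ was arbitrary, $P(D,y)\neq\emptyset$ for all $y\in D$, so $D$ is irredundant.

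For the converse, I would assume $D$ is dominating and irredundant and show that it admits no proper dominating subset, which by the reduction above amounts to checking each single-vertex deletion. Given $y\in D$, irredundance supplies a private neighbour $x\in P(D,y)$, i.e. $N[x]\cap D=\{y\}$; removing $y$ then leaves $N[x]\cap(D\setminus\{y\})=\emptyset$, so $x$ is undominated and $D\setminus\{y\}$ is not dominating. Thus no vertex of $D$ is removable, and $D$ is minimal.

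There is essentially no serious obstacle here, since this is the classical characterisation of minimal dominating sets through private neighbours; the only points requiring care are bookkeeping ones. These are the reduction of inclusion-minimality to single-vertex removals (using that supersets of dominating sets dominate), and the careful use of the symmetry $x\in N[y]\iff y\in N[x]$ together with the precise defining condition $N[x]\cap D=\{y\}$ when passing between the statements ``$x$ is dominated only by $y$'' and ``$x\in P(D,y)$''.
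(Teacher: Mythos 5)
Your proof is correct and is exactly the standard private-neighbour characterisation that the paper has in mind; the paper itself omits the proof, labelling the fact as "easy to obtain." The one point worth keeping explicit, as you do, is the reduction of inclusion-minimality to single-vertex deletions via the fact that supersets of dominating sets dominate.
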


A \emph{clique} of $G$ is a subset $C$ of $G$ that induces a complete graph, and a \emph{maximal clique} is a clique $C$ of $G$ such that $C\cup \{x\}$ is not a clique for all $x\in V_G\setminus
C$. We denote by $\cC_G$ the set of maximal cliques of $G$.

A \emph{tree} is an acyclic connected graph. Since we will talk at the same time about a graph and a tree representing it the vertices of trees will be called \emph{nodes}. A \emph{rooted tree} is a
tree with a distinguished node, called its \emph{root}, and let us denote by $\preceq_T$ the relation on a rooted tree $T$, where $u\preceq_T v$ if $v$ is on the unique path from the root to $u$; if
$u\preceq_T v$ then $v$ is called an \emph{ancestor} of $u$ and $u$ a \emph{descendant} of $v$. Two nodes $u$ and $v$ of a rooted tree $T$ are \emph{incomparable} if $u\not\!\preceq_T v$ and
$v\not\!\preceq_T u$. Given a node $u$ of a rooted $T$ the subtree of $T$ rooted at $u$ is the tree $T[\{v\in V_T\mid v\preceq_T u\}]$ which is rooted at $u$. A graph $G$ is called \emph{chordal} if
it does not contain chordless cycles of length greater than or equal to $4$.

Let $G$ be a graph and let $\cC$ be a subset of $2^{V_G}$. An \emph{output-polynomial} algorithm for $\cC$ is an algorithm that lists the elements of $\cC$ without repetitions in time
$\cO\left(p\left(\size{G}, \size{\cC}\right)\right)$ for some polynomial $p$.  We say that an algorithm enumerates $\cC$ with \emph{polynomial delay} if, after a pre-processing that runs in time
$\cO(p(\size{G}))$ for some polynomial $p$, the algorithm outputs the elements of $\cC$ without repetitions, the delay between two consecutive outputs being bounded by $\cO(q(\size{G}))$ for some
polynomial $q$ (we also require that the time between the last output and the termination of the algorithm is bounded by $\cO(q(\size{\cH}))$). It is worth noticing that an algorithm which enumerates
a subset $\cC$ of $2^{V_G}$ in polynomial delay outputs the set $\cC$ in time $\cO\left(p(\size{G}) + q(\size{G})\cdot |\cC| + \size{\cC}\right)$ where $p$ and $q$ are respectively the polynomials
bounding the pre-processing time and the delay between two consecutive outputs.  Notice that any polynomial delay algorithm is obviously an output-polynomial one, but not all output-polynomial
algorithms are polynomial delay \cite{Strozecki10}. We say that an output-polynomial algorithm uses polynomial space if there exists a polynomial $q$ such that the space used by the algorithm is
bounded by $q(\size{G})$.

\subsection{Clique Trees of Chordal Graphs} 

An \emph{intersection graph} is a graph in which each vertex corresponds to a set and two vertices are adjacent if and only if their corresponding sets intersect. The collection of sets in
correspondence with the vertices of an intersection graph is called an \emph{intersection model}.  Chordal graphs are exactly intersection graphs of subtrees in trees \cite{Gavril1974}. A chordal
graph $G$ admits at most $|V_G|$ maximal cliques. From \cite{Gavril1974} to every chordal graph $G$, one can associate a tree that we denote by $\cT_G$, called \emph{clique tree}, whose nodes are in
bijection with the maximal cliques of $G$ and such that for every vertex $x\in V_G$ the set $\cT_G(x):=\{u\in V(\cT_G)\mid$ the maximal clique of $G$ corresponding to $u$ contains $x\}$ is a subtree
of $\cT_G$. Moreover, $G$ is the intersection graph of $\{\cT_G(x)\mid x\in V_G\}$. Notice that there exist several clique trees for every chordal graph $G$, but we can compute one in linear time (see
for instance \cite{GalinierHP95}). 
Let us now consider some properties of clique trees. First of all, since the nodes
of a clique tree $\cT$ of $G$ are in bijection with the maximal cliques of $G$ each node of $\cT$ will be identified with the maximal clique with which it is in correspondence. In the rest of the
paper all trees are considered rooted.

Let $\cT_G$ be a clique tree of a chordal graph $G$ and let us denote its root by $C_r$. For each $C\in \cC_G$, let us denote by $Pa(C)$ its parent and let $f(C):=C\setminus Pa(C)$, \ie, the set of
vertices in $C$ that are not in any maximal clique $C'$ ancestor of $C$.  Notice that $\{f(C)\mid C\in \cT_G\}$ is a partition of $V_G$. For each vertex $x\in V_G$, we denote by $C(x)$ the maximal
clique $C$ satisfying $x\in f(C)$. Notice that $C(x)$ is uniquely defined since exactly one maximal clique $C$ satisfies $x\in f(C)$. For $C\in \cC_G$, the subtree rooted at $C$ is denoted by
$\cT_G(C)$, and the set of vertices $\bigcup_{C'\in \cT_G(C)} f(C')$ is denoted by $V(C)$.

\begin{property}\label{property:cliquetree} Any clique tree $\cT_G$ of a chordal graph $G$ satisfies the following. 
\begin{enumerate}
\item For each $C\in \cC_G$, and each $x\in V_G\setminus V(C)$ either $(\{x\}\times f(C)) \subseteq E_G$ or $(\{x\} \times f(C)) \cap E_G=\emptyset$.
\item For any two incomparable $C$ and $C'$ in $\cC_G$, we have $(f(C)\times f(C')) \cap E_G=\emptyset$.
\end{enumerate}
\end{property}

For $S\subseteq V_G$ let $\cC(S)$ denote the set $\{C(x)\mid x\in S\}$, $Up(S)$ the set of vertices $x$ in $V_G$ such that $C(x)$ is a proper ancestor of a clique $C\in \cC(S)$ and $Uncov(S)$ be the
vertex set $Up(S)\setminus N[S]$, \ie the set of vertices in $Up(S)$ not dominated by $S$.
For a vertex $x$, $Up(x)$ denotes $Up(\{ x\})$. A subset $A\subseteq V_G$ is an \emph{antichain} if~ (1) for any two vertices $x$ and $y$ in $A$ we have
$x\notin Up(y)$ and $y\notin Up(x)$, ~(2) for each vertex $z\in V_G\setminus Up(A)$, $A\cap (C(z)\cup Up(z)) \ne \emptyset$.  Intuitively, $A$ is an antichain if $\cC(A)$ is a maximal set of pairwise
incomparable maximal cliques. Given $S\subseteq V_G$, the \emph{top antichain} $A(S)$ is defined as the set of vertices of $S$ included in the upmost cliques in $\cC(S)$ that are not descendants of
any other in $\cC(S)$, i.e., $A(S) := \{ x\in S \mid C(x) \mbox{ is in } \max\limits_{\preceq_\cT}\{ \cC(S)\} \}$.

If $S\ne \emptyset$, let $\cL(S)$ be the set of maximal cliques $C$ satisfying~ (1) no descendant of $C$ is in $\cC(S)$, ~(2) some descendants of $Pa(C)$ is in $\cC(S)$.  In other words, $\cL(S)$ is
the set of upmost maximal cliques no descendant of which intersects with $\cC(S)$, \ie, $\cL(S):=\max\limits_{\preceq_\cT}\{C\in \cC_G\mid C\ \mbox{has no descendant in}\ \cC(S)\}$.  If $S=\emptyset$,
let $\cL(S)$ be $\{C_r\}$. We denote by $\cL'(S)$ the set $\max\limits_{\preceq_\cT}\{C'\in \cT(C)\mid C\in \cL(S)\ \mbox{and}\ C'\cap S=\emptyset\}$. 

We suppose that any clique tree $\cT$ is numbered by a pre-order of the visit of a depth-first search.  In this numbering, the numbers of the nodes in any subtree forms an interval of the numbers. It
is worth noticing that this ordering is a linear extension of the descendant-ancestor relation.  We say that a clique is smaller than another clique when its number is smaller than the other's.  We also
extend this numbering to the vertices of the corresponding graph so that the number of a vertex $x$ is smaller than that of a vertex $y$ if $C(x)$ is smaller than $C(y)$.  We also say that a vertex is
smaller than another vertex if its number is smaller than the other's.  For a vertex set $S$, $tail(S)$ denote the largest vertex in $S$.  A \emph{prefix} of a vertex set $S$ is its subset $S'$ such
that no vertex in $S\setminus S'$ is smaller than $tail(S')$.  A \emph{partial antichain} is a prefix of an antichain. We allow the $\emptyset$ to be a partial antichain.

Following this ordering of the vertices of a chordal graph $G$, a minimal dominating set $D$ is said to be \emph{greedily obtained} if we initially let $D:=V_G$ and recursively apply the following
rule: if $D$ is not minimal, find the smallest vertex $x$ in $D$ such that $D\setminus \{x\}$ is a dominating set and set $D:=D\setminus \{x\}$. Notice that given a graph $G$ there is one greedily
obtained minimal dominating set. 

\section{When Simplicity Means \np\!\!-Hardness}\label{sec:difficulty}


A typical way for the enumeration of combinatorial objects is the \emph{back tracking} technique.  We start from the emptyset, and in each iteration, we choose an element $x$, and partition the
problem into two subproblems: the enumeration of those including $x$, and the enumeration of those not including $x$, and recursively solve these enumeration problems.  If we can check the so called
\extensionproblem in polynomial time, then the algorithm is polynomial delay and uses only polynomial space.  The \extensionproblem is to answer the existence of an object including $S$ and that does
not intersect with $X$, where $S$ is the set (partial solution) that we have already chosen in the ancestor iterations, and that includes all elements we decided to put in the output solution, and $X$
is the set that we decided not to include in the output solution.

It is known that the \extensionproblem for minimal dominating set enumeration is \np\!\!-complete \cite{Mary13}, and one can even prove that it is still \np\!\!-complete in split graphs (Proposition
\ref{prop:np-split}), which are a proper subclass of chordal graphs.  However, split graphs have a good structure and in the paper \cite{KanteLMN11}, it is proved that if $S\cup X$ induces a clique
the \extensionproblem in split graphs can be solved in polynomial time and this combined with the structure of minimal dominating sets in split graphs lead to a polynomial delay algorithm for the \dom
problem in split graphs. Chordal graphs also have a good tree structure induced by clique trees.  Thus, by following this tree structure, the \extensionproblem seems to be solvable.  In precise, we
consider the case in which a path $\cP$, from the root, of the clique tree satisfies that both $V(C)\cap (S\cup X) \ne \emptyset$ and $V(C)\not\subseteq (S\cup X)$ holds only for cliques $C$ included
in $\cP$.  In other words, the condition is that for any clique $C\not\in \cP$ whose parent is in $\cP$, either $V(C)\cap (S\cup X)= \emptyset$ (totally not determined) or $V(C)\subseteq (S\cup X)$ (totally
determined) holds.  The solutions are partially determined on the path $\cP$, and thus the \extensionproblem seems to be polynomial.  However, Theorem \ref{thm:np-chordal} states that the problem is
actually \np\!\!-complete.

\begin{prop}\label{prop:np-split} The \extensionproblem is \np\!\!-complete in split graphs. \end{prop}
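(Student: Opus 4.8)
The plan is to prove membership in \np\ first and then establish hardness by a reduction from \textsc{3-Sat}. For membership, note that given a candidate set $D$ together with $S$ and $X$, one checks in polynomial time that $S\subseteq D$, that $D\cap X=\emptyset$, and, using Fact \ref{fact:min-dom-irr}, that $D$ is a minimal dominating set by verifying that $D$ is dominating and that every $x\in D$ has a private neighbor, \ie $P(D,x)\ne\emptyset$. Hence the \extensionproblem restricted to split graphs lies in \np.

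For hardness, I would reduce from \textsc{3-Sat} with variables $v_1,\dots,v_n$ and clauses $C_1,\dots,C_m$. The split graph $G$ has clique $K$ containing two literal vertices $t_i,f_i$ for each variable (so that dominating $K$ is automatic as soon as $D\cap K\ne\emptyset$), and independent set $I$ containing one selector vertex $p_i$ with $N(p_i)=\{t_i,f_i\}$ for each variable and one clause vertex $q_j$ adjacent exactly to the literal vertices that satisfy $C_j$. I would take $X:=I$, forcing every solution into the clique, and $S:=\emptyset$. The intended correspondence is that a minimal dominating set $D\subseteq K$ selects, for each variable, the literals of $D$: domination of $p_i$ forces at least one literal per variable, domination of $q_j$ forces a satisfying literal in every clause, and, by Fact \ref{fact:min-dom-irr}, irredundancy is meant to force at most one literal per variable, since $p_i$ is private to $t_i$ precisely when $f_i\notin D$. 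The easy direction is then immediate: a satisfying assignment yields the minimal dominating set consisting of the true literals, each literal $t_i$ having $p_i$ as a private neighbor.

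The main obstacle, and the step requiring real care, is the converse: an irredundant dominating set $D\subseteq K$ might select \emph{both} $t_i$ and $f_i$ for some variable, each drawing its private neighbor from a clause vertex it uniquely satisfies rather than from $p_i$, and such a $D$ need not project onto any consistent satisfying assignment. The heart of the reduction is therefore to design the variable gadget so that minimality genuinely forbids taking both literals. The plan is to decouple the two competing roles of a literal vertex -- being hit by the domination constraint coming from clause vertices, and being denied the status of a private neighbor -- by introducing for each variable an implication-style subgadget of auxiliary clique and independent vertices, partly pinned through $S$ and $X$, whose effect is that the only possible private-neighbor witness for $t_i$ (respectively $f_i$) is $p_i$. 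With this in place, irredundancy forces exactly one literal per variable, the projection onto a truth assignment is well defined, and clause domination exactly encodes satisfiability, giving the desired equivalence. Reconciling these two roles without destroying either the domination-forcing or the private-neighbor-forcing is precisely where the construction must be engineered carefully; the same phenomenon, sharpened to the path-restricted setting, is what makes Theorem \ref{thm:np-chordal} possible.
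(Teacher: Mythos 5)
Your reduction is not the paper's: the paper does not build any SAT gadgets at all. It starts from the fact, already proved in \cite{Mary13}, that deciding whether an arbitrary graph $G$ has a minimal dominating set containing a prescribed set $A$ is \np\!\!-complete, and transfers that problem to split graphs by a generic transformation: take the split graph on $V_G\cup V_G'$ ($V_G$ a clique, $V_G'$ an independent copy, $xy'\in E$ iff $x\in N_G[y]$) and ask for a minimal dominating set containing $A$ and avoiding $V_G'\setminus N[A]$. This sidesteps entirely the combinatorial difficulty you run into.

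That difficulty is exactly where your proposal has a genuine gap. You correctly identify that with $S=\emptyset$, $X=I$, an irredundant dominating set $D\subseteq K$ may contain both $t_i$ and $f_i$, each certified by a clause vertex it alone satisfies, so the projection to a truth assignment is not well defined. But the ``implication-style subgadget of auxiliary clique and independent vertices, partly pinned through $S$ and $X$, whose effect is that the only possible private-neighbor witness for $t_i$ is $p_i$'' is never constructed; it is only asserted to exist. This is not a routine detail: the whole hardness proof rests on it, and it is far from clear how to arrange, inside a split graph where every clique vertex is adjacent to every other, that a clause vertex $q_j$ can force domination by a literal of $C_j$ while simultaneously being disqualified as a private-neighbor witness for that literal. (Note also that for $|D|\ge 2$ no clique vertex can be its own private neighbor, so all witnesses must come from $I$, which makes the two roles of the clause vertices collide head-on.) As written, the converse direction of your equivalence is unproved, and hence so is hardness. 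Either supply the gadget explicitly and verify both directions, or take the paper's route and reduce from the known \np\!\!-complete extension-by-inclusion problem of \cite{Mary13}, which requires no gadgetry. Your membership argument is fine.
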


\begin{proof} It is proved in \cite{Mary13} that the following problem is \np\!\!-complete: Given $G$ and $A\subset V_G$ decide whether there exists a minimal dominating set of $G$ containing $A$. We
  reduce it to the \extensionproblem in split graphs. Let $G$ be a graph, and let $V_G':=\{x'\mid x\in V_G\}$ a disjoint copy of $V_G$. We let $Split(G)$ be the split graph with vertex set $V_G\cup
  V_G'$ where $V_G$ and $V_G'$ are respectively the clique and the independent set in $Split(G)$; now $xy'$ is an edge if $x\in N[y]$.  Now it is easy to check that asking whether there exists a
  minimal dominating set of $G$ that contains $A\subset V_G$ is equivalent to asking whether there exists a minimal dominating set of $Split(G)$ that contains $A$ and does not intersect with
  $V_G'\setminus A'$ where $A':=N_{Split(G)}[A]\cap V_G'$. \end{proof}

\begin{thm}\label{thm:np-chordal} The \extensionproblem is \np\!\!-complete in chordal graphs even if a path $\cP$, from the root, of the clique tree satisfies that any child $C$ of a clique in $\cP$ satisfies either $V(C)\cap (S\cup
  X)=\emptyset$ or $V(C)\subseteq (S\cup X)$.
\end{thm}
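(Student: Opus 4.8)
The plan is to prove \np-completeness by reducing from a known \np-complete problem, and the natural candidate is the problem used in Proposition~\ref{prop:np-split}, namely deciding whether a graph $H$ has a minimal dominating set containing a prescribed set $A$. Membership in \np is immediate: a witness is a minimal dominating set $D$ with $S\subseteq D$ and $D\cap X=\emptyset$, and by Fact~\ref{fact:min-dom-irr} one can verify in polynomial time that $D$ dominates $G$ and is irredundant. The real content is hardness together with the structural constraint on the clique tree: we must produce an instance where the ``partially determined on a single path'' phenomenon holds, so that the intuitively-easy-looking special case is already hard.

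\begin{proof}[Proof sketch]
First I would fix the hard source problem: by \cite{Mary13} it is \np-complete to decide, given a graph $H$ and $A\subseteq V_H$, whether $H$ has a minimal dominating set containing $A$. The idea is to build a chordal graph $G$ together with sets $S$ and $X$ so that the extension instances of $G$ correspond exactly to the minimal dominating sets of $H$ containing $A$, \emph{and} so that the ``determined vs.\ undetermined'' dichotomy of $S\cup X$ concentrates on one root-to-node path $\cP$ of a clique tree of $G$. The construction should start from the split graph $Split(H)$ of Proposition~\ref{prop:np-split}, where $V_H$ is a clique and $V_H'$ an independent set with $x\,y'\in E$ iff $x\in N_H[y]$; recall that minimal dominating sets of $H$ containing $A$ correspond to minimal dominating sets of $Split(H)$ containing $A$ and avoiding $V_H'\setminus A'$ with $A'=N_{Split(H)}[A]\cap V_H'$.

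The key step is to augment $Split(H)$ so that the clique tree acquires a single distinguished path carrying \emph{all} the partial information of $S\cup X$, while every off-path subtree is either completely inside $S\cup X$ or completely disjoint from it. Concretely, I would attach to the central clique $V_H$ a chain of auxiliary cliques $C_1\supseteq\cdots$ (or a path of new cliques hanging off the clique-tree root) whose own vertices are placed entirely in $S$ (hence ``totally determined''), and I would route the vertices that must be decided, namely the copies in $V_H'$ encoding the ``avoid'' constraint, into pendant subtrees each attached to a node of this path; each such pendant subtree should be forced, by a gadget of private-neighbor vertices, to be either wholly in $X$ or wholly undecided. Setting $X:=V_H'\setminus A'$ distributed across these pendant cliques, and $S:=A$ together with the auxiliary path-vertices, yields exactly the required property that $V(C)\cap(S\cup X)\ne\emptyset$ and $V(C)\not\subseteq(S\cup X)$ only for cliques $C$ on $\cP$.

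I would then verify the two directions of correctness: a minimal dominating set of $G$ extending $(S,X)$ restricts, on the $Split(H)$-part, to a minimal dominating set of $Split(H)$ containing $A$ and avoiding $V_H'\setminus A'$, and conversely any such set lifts to an extension in $G$ once the auxiliary private-neighbor gadgets are dominated in the unique forced way. The main obstacle I anticipate is the gadget design: I must guarantee simultaneously that (i) the added vertices are chordal and admit a clique tree with the single-path structure, (ii) the auxiliary vertices force their own membership (via private neighbors, invoking the irredundancy half of Fact~\ref{fact:min-dom-irr}) so they do not create spurious extensions, and (iii) the domination requirement on the gadgets does not secretly let a vertex of $X$ be avoided for free or let $A$ fail to be extendable. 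Balancing the forcing gadgets so that they are rigid enough to preserve the equivalence yet flexible enough to respect the clique-tree path condition is where the care lies; the counting and domination checks afterwards are routine.
\end{proof}
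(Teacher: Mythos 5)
Your route is genuinely different from the paper's: the paper reduces directly from \sat, building a bespoke clique tree (clause and $p_i,\bar p_i$ vertices in the root, literal cliques and pendant $q_i,y_i,z_i$ gadgets below) and taking $\cP=\{C_r\}$, whereas you reduce from the problem of deciding whether a graph $H$ has a minimal dominating set containing a prescribed $A$, routed through $Split(H)$. As written, however, your proposal has a real gap: the entire construction that is supposed to make the path condition hold --- the ``chain of auxiliary cliques,'' the ``pendant subtrees,'' the ``forcing gadgets of private-neighbor vertices'' --- is never specified, and you yourself identify its design as the unresolved crux. A reduction whose gadgets are not exhibited and whose two correctness directions are not checked against a concrete graph is not yet a proof of an \np\!\!-hardness statement.

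The good news is that the gap closes more easily than you expect, because no gadgets are needed at all. Assuming $H$ has no universal vertex (the source problem stays \np\!\!-complete under this restriction), $V_H$ is a maximal clique of $Split(H)$, and $Split(H)$ admits the clique tree rooted at $C_r=V_H$ whose children are exactly the cliques $C_y=\{y'\}\cup N_H[y]$, with $f(C_y)=\{y'\}$ and hence $V(C_y)=\{y'\}$. Taking $S=A\subseteq V_H$ and $X=V_H'\setminus A'$ as in Proposition \ref{prop:np-split}, every child $C_y$ of the root satisfies either $V(C_y)=\{y'\}\subseteq X\subseteq S\cup X$ (when $y'\notin A'$) or $V(C_y)\cap(S\cup X)=\emptyset$ (when $y'\in A'$, since $S\subseteq V_H$), so $\cP=\{C_r\}$ already witnesses the condition of the theorem and Theorem \ref{thm:np-chordal} follows from the instance of Proposition \ref{prop:np-split} with no further construction. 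Two caveats if you go this way: your argument then inherits the burden of the backward direction of Proposition \ref{prop:np-split} (the only delicate point is handling minimal dominating sets of $Split(H)$ that use vertices of $A'$), and you lose what the paper's \sat construction is really there to exhibit, namely a root clique that simultaneously contains undominated vertices and the competing candidate private neighbors $p_i,\bar p_i$ of several vertices of $S$ --- the precise obstruction that motivates the antichain-based algorithm in the rest of the paper.
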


\begin{proof} We reduce \sat to our problem.  Let $\varphi$ be an instance of \sat with $x_1,\ldots,x_n$ the variables and $c_1,\ldots,c_m$ the clauses of $\varphi$.  We construct a chordal graph as
  follows.  The vertex set of the graph is
  \begin{align*}
    & \{ x_1,\dots,x_n, c_1,\ldots,c_m, p_1,\ldots,p_n,\bar{p}_1,\ldots,\bar{p}_n, l_1,\ldots,l_n\}\  \bigcup \\ & \{\bar{l}_1,\ldots,\bar{l}_n, y_1,\ldots,y_n,z_1,\ldots,z_n, q_1,\ldots,q_n,
    \bar{q}_1,\ldots,\bar{q}_n\}, 
  \end{align*}
  where $l_i$ and $\bar{l}_i$ are literals representing respectively $x_i$ and $\widebar{x_i}$ (notice that if one literal does not appear, the corresponding vertex is not created).  Since with every
  clique tree one can associate a unique chordal graph, we will construct the clique tree of the chordal graph.  For each $1\leq i \leq n$, we let $C(l_i)$ and $C(\bar{l}_i)$ be the set of clauses
  containing the literal $l_i$ and $\bar{l}_i$ respectively.  We let its root be $C_r := \{ c_1,\ldots,c_m, p_1,\ldots,p_n,\bar{p}_1,\ldots,\bar{p}_n \}$. The other maximal cliques are defined as
  follows. For each $1\leq i \leq n$, we let $C_{x_i} = \{ x_i, p_i, \bar{p}_i\}$, $C_{y_i} = \{ y_i, x_i\}$, $C_{z_i} = \{ y_i, z_i\}$, $C_{q_i} = \{q_i, l_i\}$, $C_{\bar{q}_i} = \{\bar{q}_i,
  \bar{l}_i\}$, $C_{l_i} = \{ l_i, p_i\}\cup C(l_i)$, and $C_{\bar{l}_i} = \{ \bar{l}_i, \bar{p}_i\}\cup C(\bar{l}_i)$ with the following parent-child relation: $C_{x_i}$, $C_{l_i}$ and
  $C_{\bar{l}_i}$ are the children of $C_r$, $C_{y_i}$ is the only child of $C_{x_i}$ and $C_{z_i}$ is the only child of $C_{y_i}$, $C_{q_i}$ and $C_{\bar{q}_i}$ are the only children of $C_{l_i}$ and
  $C_{\bar{l}_i}$ respectively. It is easy to check that the constructed tree is indeed a clique tree. See Figure \ref{fig:np-chordal} for an illustration.

  We set $S := \{x_1,\ldots,x_n, y_1,\ldots,y_n\}$ and $X := \{ z_1,\ldots, z_n, p_1,\ldots,p_n,\bar{p}_1,\ldots,\bar{p}_n\} \cup$\\ $\{c_1,\ldots,c_m\}$ and $\cP:=\{C_r\}$.  For each $1\leq i \leq
  n$, we have by construction $V(C_{x_i})\subseteq S\cup X$, and $(V(C_{l_i}) \cup V(C_{\bar{l}_i})) \cap (S\cup X) = \emptyset$. Therefore, for any maximal clique $C$ child of $C_r$, either $V(C)\cap
  (S\cup X) = \emptyset$, or $V(C)\subseteq (S\cup X)$ holds, thus the condition of the statement holds.

  One can easily check that any satisfiable assignment of $\varphi$ leads to a minimal dominating set containing $S$ and that does not intersect $X$. Let us prove the converse direction. We observe when we
  choose both $l_i$ and $\bar{l}_i$ in the dominating set, $x_i$ loses its private neighbors.  Thus, any minimal dominating set can include at most one of them.  On the other hand, exactly one of
  $l_i$ and $q_i$ (resp., $\bar{l}_i$ and $\bar{q}_i$) must be included in any minimal dominating set, so that it dominates $l_i$ and $q_i$ (resp., $\bar{l}_i$ and $\bar{q}_i$), and both must be
  private neighbors of the chosen one.  Moreover, to dominate each clause $c_j$, at least one literal of $c_j$ has to be included in any minimal dominating set. Hence, for any minimal dominating set
  $D$ including $S$ and not intersect with $X$, the set of literals included in $D$ corresponds to a satisfiable assignment.  Therefore, the answer of the \extensionproblem is yes if and only if $\varphi$ has a satisfiable assignment.
\end{proof}

\begin{figure}
  \begin{center}
    \includegraphics[width=\textwidth]{./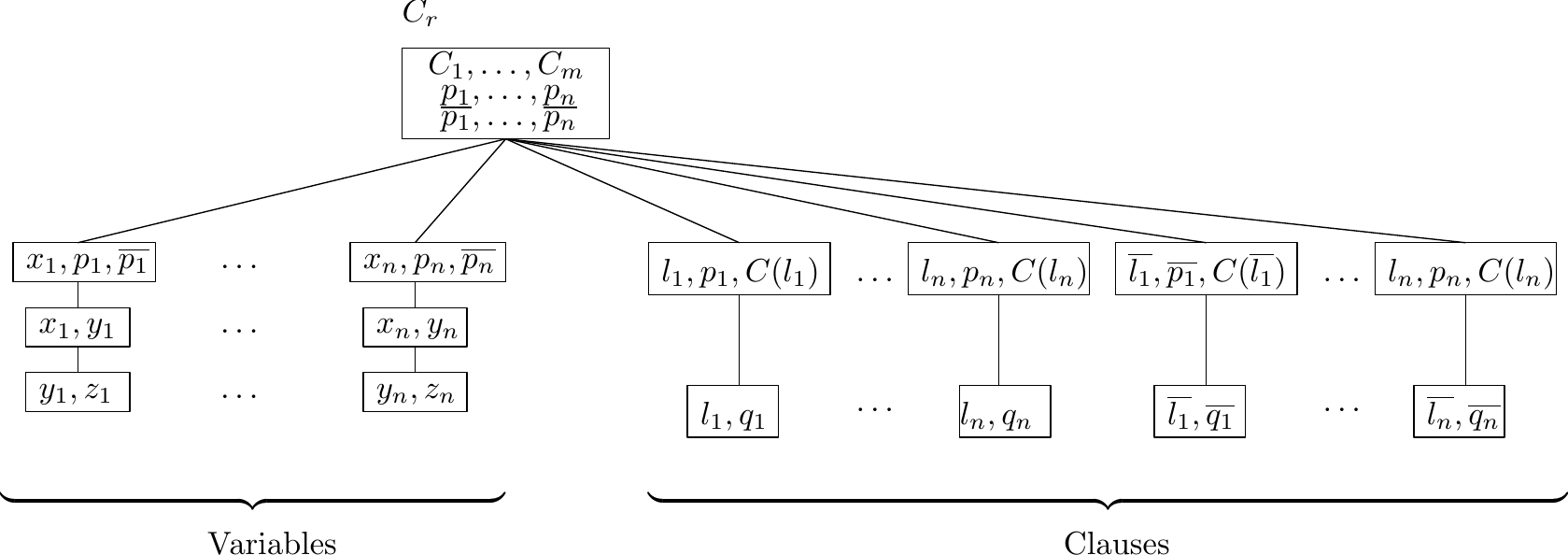}
    \end{center}
  \caption{An illustration of the construction of Theorem \ref{thm:np-chordal}.}
   \label{fig:np-chordal}
\end{figure}

To overcome these difficulties, we will follow another approach.  As we can see in the proof of the \np\!\!-completeness, when the root clique has both un-dominated vertices and private neighbors of several
vertices of $S$, the \extensionproblem turns to be difficult.  In the following, we will introduce a new strategy for the enumeration, that repeatedly enumerates antichains in levelwise manner. Indeed
for any minimal dominating set $D$ of a chordal graph $G$ the set $A(D)$ is an antichain that moreover dominates $Up(A(D))$.  Our strategy consists in enumerating such antichains and for each such
antichain $A$ enumerates the minimal dominating sets $D$ such that $A(D)=A$.  Let's be more precise in the next sections. 

\section{$(K_1,K_2)$-Extensions} \label{sec:kextension}

Along this section we consider a fixed chordal graph $G$ and clique tree $\cT$ of $G$ with root $C_r$ so that we do not need to recall them in the statements. 

Let $K_1, K_2\subseteq C_r$ be given disjoint sets that are decided to be included in the solution. In our setting $K_2$ will be the set of vertices that have already been assigned private neighbors
so that we do not need to search one for them.  Without confusion we denote $K_1\cup K_2$ by $K$.  A \emph{$(K_1,K_2)$-extension} of a partial antichain $A$ is a vertex set $D$ such that $(A\cup
K)\subseteq D$ and $D\setminus (A\cup K) \subseteq \bigcup\limits_{C\in \cL(A\cup K)} V(C)$.  Observe that if $D$ is a $(K_1,K_2)$-extension of $A$, then $A$ is a prefix of $A(D)$.  When the partial
antichain is not specified, $(K_1,K_2)$-extension is that for the empty partial antichain.  A $(K_1,K_2)$-extension $D$ is \emph{feasible} if it is a dominating set and $P(D,x)\ne \emptyset$ for all
$x\in D\setminus K_2$. A partial antichain $A$ is \emph{$(K_1,K_2)$-extendable} if it has a feasible $(K_1,K_2)$-extension.

For $C\in \cC_G$ and $x\in C$, let $\cF(C,x):=\{C'\preceq_\cT C$ and $C'\in \cL'(x)\}$, and let $D_C(x)$ denote a vertex set composed of
\begin{enumerate}
\item $Z\subseteq V(C)\cap \left(\bigcup\limits_{C'\in \cF(C,x)} C'\right)$ such that $|Z\cap C'| = |Z\cap f(C')| = 1$ for all $C'\in \cF(C,x)$,
\item a greedily obtained minimal dominating set of $G[(V(C)\setminus N[x])\setminus N[Z]]$.
\end{enumerate}

If $x\notin C$, then we let $D_C(x)$ be a greedily obtained minimal dominating set of $G[V(C)]$.

\begin{property}[Irredundancy of $D_C(x)$]\label{property:irredundant} Let $C\in \cC_G$ and let $x\in V_G$. Then $D_C(x)$ is an irredundant set in $G[V(C)]$. \end{property}

\begin{proof} Since each minimal dominating set is also an irredundant set, we can assume that $x\in C$. By definition of $Z$ we have that $\{x\}\times Z\cap E_G = \emptyset$. Moreover, by Property
  \ref{property:cliquetree}(2) no two vertices of $Z$ are adjacent. Since by construction of $D_C(x)\setminus Z$ no vertex in $D_C(x)\setminus Z$ is adjacent to a vertex of $Z$, we can conclude that for each
  $z\in Z$ we have $z\in P(D_C(x),z)$. Moreover, since $(D_C(x)\setminus Z) \cap N[Z]=\emptyset$ and $D_C(x)\setminus Z$ is a minimal dominating set of $G[(V(C)\setminus N[x])\setminus N[Z]]$, we can
  conclude that $P(D_C(x),y)\ne \emptyset$ for all $y\in (D_C(x)\setminus N[x])\setminus N[Z]$.
\end{proof}

\begin{property}[{Domination of $V(C)\setminus N[x]$}]\label{property:domination} Let $C\in \cC_G$ and let $x\in V_G$. Every vertex in $V(C)\setminus N[x]$ is dominated by $D_C(x)$.  \end{property}

\begin{proof} If $x\notin C$, then $D_C(x)$ is a minimal dominating set of $G[V(C)]$ and then we are done. So, assume that $x\in C$ and let $y\in V(C)\setminus N[x]$. Then $C(y)$ is necessarily a
  descendant of a clique $C'\in \cL'(x)$ and such that $C'\preceq_{\cT} C$. So, either $y\in N[Z]$ or $y\notin N[Z]$. In both cases, it is dominated by $D_C(x)$.
\end{proof}

Given disjoint sets $K_1, K_2\subseteq C_r$, $D\subseteq V_G\setminus K$ and $x\in D\cup K_1$, a vertex $y\in P(D\cup K,x)$ is said \emph{safe} if either $x=y$, or the following two
conditions are satisfied
\begin{enumerate}
\item[(S1)] $N(y)\cap V(C) \subseteq N[D_C(y)]$ for all $C\in \cL'(D\cup K)$ with $y\in C$ and 
\item[(S2)] for each $z\in N[y]\cap Uncov(D\cup K)$, there is a clique $C\in \cL'(D\cup K)$ such that $z\in N[D_C(y)]$. 
\end{enumerate}

A vertex $x\in D$ is said \emph{safe} if one of its private neighbors is safe. 

\begin{property}[{Domination of $V(C)\setminus \{y\}$ for safe $y$}]\label{property:dom} Let $x\in D\cup K_1$ and let $y\in P(D\cup K,x)$ be a safe for $x$. Then $V(C)\setminus \{y\}\subseteq
  N[D_C(y)]$ for all $C\in \cL'(D\cup K)$ with $y\in C$. 
\end{property}

\begin{proof} By Property \ref{property:domination} $V(C)\setminus N[y]$ is dominated by $D_C(y)$. By definition of safety $N(y)$ is dominated by $D_C(y)$. Therefore $V(C)\setminus \{y\}$ is
  dominated by $D_C(y)$ for all $C\in \cL'(D\cup K)$ with $y\in C$. 
\end{proof}

We will now prove some technical lemmas that will be used to prove the correctness of the algorithm. 

\begin{lem}[Extension Safe]\label{lem:extension-safe} Let $A$ be a partial antichain and let $x\in A\cup K_1$. For $y\in P(A\cup K,x)$ that is non-safe, no $(K_1,K_2)$-extension $D$ of $A$ that is a dominating set
  satisfies that $y\in P(D,x)$.
\end{lem}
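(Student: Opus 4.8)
The plan is to argue by contraposition: assuming a $(K_1,K_2)$-extension $D$ of $A$ that is a dominating set and satisfies $y\in P(D,x)$, I will show that $y$ is safe for $x$. Write $S:=A\cup K$. If $y=x$ there is nothing to prove, so assume $y\neq x$. From $y\in P(D,x)$ and $S\subseteq D$ I first record the basic facts that $y\notin D$, that $N[y]\cap D=\{x\}$ (so $x$ is the only neighbour of $y$ lying in $D$), and that $y\in P(S,x)$, keeping the hypothesis consistent. I treat one degenerate possibility up front: if $y\in Uncov(S)$ then $y\notin N[S]$, yet $x\in S$ dominates $y$, a contradiction; hence under our hypothesis $y\notin Uncov(S)$, which is exactly what lets condition (S2) not be vacuously broken at $z=y$ (recall $y\notin N[D_C(y)]$ by construction of $D_C(y)$).

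The geometric engine I would establish next is a locality statement: for every $C\in\cL'(S)$, no vertex of $S$ is adjacent to any vertex of $V(C)$. Indeed, if $s\in S$ were adjacent to some $v\in V(C)$, a common maximal clique $C^{*}$ of $s$ and $v$ would be a common descendant of $C(s)$ and of $C(v)\preceq_\cT C$; since $C$ lies inside the subtree of some $\tilde C\in\cL(S)$ having no descendant in $\cC(S)$ while $C(s)\in\cC(S)$, comparability forces $C(s)$ to be a proper ancestor of $\tilde C$, so $C$ lies on the path from $C(s)$ down to $C^{*}$, and connectedness of the subtree of maximal cliques containing $s$ forces $s\in C$, contradicting $C\cap S=\emptyset$. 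Consequently any vertex of such a $V(C)$ that $D$ dominates is dominated by a vertex of $D\setminus S$, and such a dominator is distinct from $x$, hence by $N[y]\cap D=\{x\}$ avoids $N[y]$ entirely. I also record the adjacency fact that for $v\sim y$ with $C(v)\preceq_\cT C(y)$ one has $y\in C(v)$: otherwise no clique of the subtree of $v$ would contain $y$ (using connectedness of the subtree of cliques containing $y$), contradicting $v\sim y$.

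I then verify (S1). Fix $C\in\cL'(S)$ with $y\in C$ and $w\in N(y)\cap V(C)$; here $C(w)\preceq_\cT C\preceq_\cT C(y)$, so $y\in C(w)$ by the adjacency fact. By the locality statement a dominator $d\in D$ of $w$ lies in $D\setminus S$, satisfies $d\notin N[y]$ and $d\sim w$; choosing a maximal clique $C_2$ containing $w$ and $d$, we have $y\notin C_2$ (else $d\sim y$), hence $C_2\prec_\cT C(w)$. Let $\hat C$ be the child, toward $C_2$, of the lowest clique on the path from $C(w)$ to $C_2$ that contains $y$. Then $\hat C$ is $y$-free with a $y$-containing parent, so $\hat C\in\cL'(y)$, and $\hat C\preceq_\cT C$ gives $\hat C\in\cF(C,y)$; moreover $\hat C$ lies on the path from $C(w)$ to $C_2$, so $w\in\hat C$ by subtree connectedness. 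The selection $Z$ of $D_C(y)$ contributes a vertex $z_0\in f(\hat C)\subseteq\hat C$, and since $w\in\hat C$ with $w\neq z_0$ we get $w\sim z_0$, i.e. $w\in N[Z]\subseteq N[D_C(y)]$. Thus (S1) holds, and together with Property~\ref{property:domination} this recovers the conclusion of Property~\ref{property:dom}.

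Finally (S2) follows by the same route applied to the upper target $z\in N[y]\cap Uncov(S)$ with $z\neq y$: the definition of $Uncov$ gives $z\notin N[S]$, so again a dominator $d\in D\setminus S$ of $z$ avoids $N[y]$, and running the construction of $\hat C$ from $C(z)$ produces a $y$-free clique $\hat C\in\cL'(y)$ with $z\in\hat C$, whose $Z$-vertex $z_0\in f(\hat C)$ satisfies $z\sim z_0$. I expect the real obstacle to be here: unlike in (S1), the governing clique $C\in\cL'(S)$ is not handed to us, and one must exhibit $C\in\cL'(S)$ with $y\in C$ and $\hat C\in\cF(C,y)$ so that $z_0\in Z\subseteq D_C(y)$. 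Making this selection airtight requires the comparability bookkeeping between $C(z)$ (which, since $z\in Up(S)$, sits strictly above the $\cL(S)$-frontier), the clique $\tilde C\in\cL(S)$ below which $d$ lives, and the subtree of cliques containing $y$, together with the locality statement to guarantee $\hat C$ is $S$-free. This case analysis, rather than any single computation, is where the bulk of the work lies.
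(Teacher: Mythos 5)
Your contrapositive framing is logically fine, and your treatment of (S1) is correct; indeed it supplies exactly the clique-tree bookkeeping that the paper's own (very terse) direct argument leaves implicit. The paper simply asserts that a vertex $z\in (N(y)\cap V(C))\setminus N[D_C(y)]$ forces every dominating extension to meet $N[y]\setminus\{x\}$; your chain ``dominator $d\notin N[y]$ yields a clique $C_2\prec_{\cT} C(w)$ avoiding $y$, hence a clique $\hat C\in\cF(C,y)$ containing $w$, hence $w\in N[Z]$'' is precisely the contrapositive of that assertion with the details filled in. Your locality statement and your adjacency fact both check out.

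The genuine gap is the (S2) case, which you flag yourself but do not close, and it is half the lemma. Concretely, three things are missing. First, you must actually exhibit the clique $C\in\cL'(A\cup K)$ that (S2) demands; the natural candidate is the topmost $S$-free clique above a common maximal clique $C_2$ of $z$ and its dominator $d$ (here one needs that $C_2\cap S=\emptyset$, which follows from $z\in Uncov(S)\Rightarrow z\notin N[S]$, and that $C_2\preceq_{\cT}\tilde C$ for the $\tilde C\in\cL(A\cup K)$ with $d\in V(\tilde C)$ guaranteed by the definition of a $(K_1,K_2)$-extension), but none of this comparability work appears in your text. Second, (S2) does not require $y\in C$: if the clique $C$ you land on does not contain $y$, then $D_C(y)$ is just a greedily obtained minimal dominating set of $G[V(C)]$ and the $\cF(C,y)$/$Z$ mechanism is unavailable, so a separate argument is needed to get $z\in N[D_C(y)]$; your sketch covers only the $y\in C$ branch. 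Third, even in that branch you must verify $\hat C\preceq_{\cT} C$ so that $\hat C\in\cF(C,y)$ and its $Z$-vertex really lies in $D_C(y)$, and your construction ``from $C(z)$'' cannot start there, since $C(z)$ need not contain $y$ (one must instead descend from a common maximal clique of $y$ and $z$). Since you explicitly defer ``the bulk of the work'' on this case, the proof is incomplete as written, even though the route you propose is the right one and is, in substance, the contrapositive of the paper's argument.
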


\begin{proof} Since $y$ is not safe, we have $x\ne y$, and therefore $y$ violates one of the two conditions (S1) or (S2) to be safe.  Suppose that (S1) is not satisfied, \ie there is a clique $C\in
  {\cL}'(A\cup K), y\in C$ such that there is a vertex $z$ in $(N(y)\cap V(C))\setminus N[D_C(y)]$.  
  Thus, any $(K_1,K_2)$-extension $D$ of $A$ that is a dominating set includes some vertices in $N[y]$ other than $x$, thus $y$ is not a private neighbor of $x$.

  Suppose now that (S2) is not satisfied, \ie there is a vertex $z\in N[y]\cap Uncov(A\cup K)$ such that no clique $C\in {\cL}'(A\cup K)$ satisfies $z\in N[D_C(y)]$.  It implies from the
  definition of $D_C(y)$ that no vertex in $V(C)\setminus N[y]$ is adjacent to $z$ in all cliques $C\in {\cL}'(A\cup K)$. Thus, as in the previous case, in any $(K_1,K_2)$-extension $D$ of $A$, $y$ is not a
  private neighbor of $x$ unless $D$ is not a dominating set.
\end{proof}

\begin{lem}[Lower Private Neighbor] \label{lem:lowerpn} Let $A$ be a partial antichain and let $x\in A\cup K_1$ be safe. Then there is $y\in P(A\cup K,x)$ that is safe and such that $y\in V(C(x))$.
\end{lem}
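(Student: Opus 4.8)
The plan is to reduce the statement to a short case analysis driven by the position of $x$ in the clique tree, the only genuinely structural ingredient being a precise description of where the neighbors of $x$ can lie.

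First I would dispose of the case $x \in K_1$. Since $K_1 \subseteq C_r$ and $C_r$ is the root, $C_r$ is the topmost clique containing $x$, so $C(x) = C_r$; then $V(C(x)) = V(C_r) = V_G$, because the subtree rooted at $C_r$ is all of $\cT$ and the sets $f(C')$ partition $V_G$. As $x$ is safe it has a safe private neighbor, which now trivially lies in $V(C(x))$. Hence from now on I may assume $x \in A$. The one structural fact I would isolate is that $N(x) \subseteq V(C(x)) \cup (Pa(C(x)) \cap C(x))$. This holds because $x \in f(C(x))$ forces $C(x)$ to be the topmost clique containing $x$ (otherwise $x \in Pa(C(x))$), so every clique meeting $x$, hence every edge at $x$, is confined to the subtree rooted at $C(x)$. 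For a neighbor $v$, choosing a maximal clique containing both $x$ and $v$ (necessarily a descendant of $C(x)$), either $C(v) \preceq_\cT C(x)$, giving $v \in V(C(x))$, or $C(v) \succ_\cT C(x)$, in which case connectedness of $\cT(v)$ forces $v$ into every clique between $C(v)$ and that common clique, so $v \in Pa(C(x)) \cap C(x)$ and $v \notin V(C(x))$.

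Next comes the case analysis on $x \in A$. If $N(x) \cap (A \cup K) = \emptyset$, then $N[x] \cap (A\cup K) = \{x\}$, so $x \in P(A\cup K, x)$; the vertex $x$ is a safe private neighbor by the $x = y$ clause of the definition of safety, and $x \in f(C(x)) \subseteq V(C(x))$, so $y := x$ works. Otherwise I fix some $w \in N(x) \cap (A\cup K)$ (so $w \neq x$) and let $y$ be any safe private neighbor of $x$; here $x \notin P(A\cup K,x)$, so $y \neq x$. It remains to show $y \in V(C(x))$, which I would do by contradiction: if $y \notin V(C(x))$ then by the structural fact $y \in Pa(C(x)) \cap C(x)$. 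Since $y$ is private, $N[y]\cap(A\cup K) = \{x\}$, so $w \notin N[y]$; but if $w \in C(x)$ then $w$ and $y$ share the clique $C(x)$ and are adjacent (or equal), forcing $w \in N[y]\cap(A\cup K) = \{x\}$, i.e. $w = x$, a contradiction. Thus $w \notin C(x)$, and with the structural fact this yields $w \in V(C(x))$ and $C(w) \prec_\cT C(x)$. As $w \in K \subseteq C_r$ would give $C(w) = C_r \succeq_\cT C(x)$, we must have $w \in A$; but then $x, w \in A$ with $C(x)$ a proper ancestor of $C(w)$ means $x \in Up(w)$, contradicting antichain condition (1), which $A$ inherits as a subset of a genuine antichain. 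Hence $y \in V(C(x))$, and this $y$ is the required safe private neighbor.

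The main obstacle I anticipate is pinning down exactly where a private neighbor of $x$ can sit and then excluding the ``too high'' position: the content of the lemma is precisely that a safe private neighbor cannot escape into $Pa(C(x))$, and the leverage for ruling that out is the antichain property of $A$ together with the observation that a neighbor of $x$ in $A\cup K$ forces a descendant relation incompatible with being an antichain. Everything else is routine bookkeeping with the clique-tree relations $\preceq_\cT$, $C(\cdot)$, $V(\cdot)$ and $Up(\cdot)$.
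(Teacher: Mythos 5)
Your proof is correct and follows essentially the same route as the paper's: both rest on the observation that any neighbor of $x$ lying outside $V(C(x))$ must belong to $C(x)\cap Pa(C(x))$, combined with the antichain and clique-tree properties to force the second vertex of $A\cup K$ in $N[x]$ into $C(x)$, which kills any private neighbor above $V(C(x))$. The paper's three-line argument leaves these steps implicit; your write-up carries out the same case analysis explicitly, merely organized in contrapositive form.
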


\begin{proof} The statement holds if $x\in P(A\cup K,x)$.  If not, $C(x)$ includes another vertex in $A\cup K$, and it is adjacent to any vertex in $N[x]\setminus V(C(x))$ by Property
  \ref{property:cliquetree}.  Thus all its safe private neighbors are always in $V(C(x))$.
\end{proof}

\begin{lem}[Extendability of Partial Antichain]\label{lem:extend-pa}
 A partial antichain $A$ is $(K_1,K_2)$-extendable if and only if the following
  two conditions are satisfied
\begin{enumerate}
\item any vertex in $Uncov(A\cup K)$ is included in a clique of ${\cL}'(A\cup K)$, 
\item all vertices in $A\cup K_1$ are safe.
\end{enumerate}
\end{lem}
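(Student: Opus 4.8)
The plan is to prove the two implications separately: for necessity I would start from a feasible $(K_1,K_2)$-extension $D$ of $A$ and read off the two conditions, and for sufficiency I would assume the two conditions and build a feasible extension out of the blocks $D_C(\cdot)$.

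\emph{Necessity.} Assume $A$ is $(K_1,K_2)$-extendable and fix a feasible extension $D$. For condition~(2), note that $A\cup K_1\subseteq D\setminus K_2$, so feasibility gives each $x\in A\cup K_1$ a private neighbour $y\in P(D,x)$. Since $A\cup K\subseteq D$ we have $P(D,x)\subseteq P(A\cup K,x)$, hence $y\in P(A\cup K,x)$ as well; as $D$ is a dominating $(K_1,K_2)$-extension of $A$ with $y\in P(D,x)$, the contrapositive of Lemma~\ref{lem:extension-safe} forces $y$ to be safe, whence $x$ is safe. For condition~(1), take $v\in Uncov(A\cup K)$. Because $D$ dominates $v$ while $A\cup K$ does not, some $w\in D\setminus(A\cup K)\subseteq\bigcup_{\tilde C\in\cL(A\cup K)}V(\tilde C)$ is a neighbour of $v$ with $w\neq v$. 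Choosing $\tilde C\in\cL(A\cup K)$ with $w\in V(\tilde C)$, a common clique of $v$ and $w$ lies in $\cT(\tilde C)$; since $\cT_G(v)$ is a subtree whose top $C(v)$ is a proper ancestor of a clique of $\cC(A\cup K)$, and $C(v)$ cannot lie strictly below $\tilde C$ (as $\tilde C\in\cL(A\cup K)$ has no descendant in $\cC(A\cup K)$), connectivity of $\cT_G(v)$ forces $\tilde C\in\cT_G(v)$, i.e.\ $v\in\tilde C$. Finally $v\notin N[A\cup K]$ yields $\tilde C\cap(A\cup K)=\emptyset$, so $\tilde C$ is the topmost clique of its subtree missing $A\cup K$, i.e.\ $\tilde C\in\cL'(A\cup K)$; thus $v$ lies in a clique of $\cL'(A\cup K)$.

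\emph{Sufficiency.} Assume conditions~(1) and~(2). For each safe $x\in A\cup K_1$ I would fix, via Lemma~\ref{lem:lowerpn}, a safe private neighbour $y_x\in P(A\cup K,x)\cap V(C(x))$, and then set $D:=(A\cup K)\cup\bigcup_{C\in\cL'(A\cup K)}D_C^{\star}$, where $D_C^{\star}=D_C(y_x)$ whenever some designated $y_x$ belongs to $C$ and $D_C^{\star}=D_C(w)$ with $w\notin C$ otherwise. Since each $D_C^{\star}\subseteq V(C)\subseteq\bigcup_{\tilde C\in\cL(A\cup K)}V(\tilde C)$, this $D$ is indeed a $(K_1,K_2)$-extension of $A$, and it remains to show it is feasible. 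For domination I would partition $V_G$ along the clique tree: every vertex whose home clique lies in $\cC(A\cup K)$ or strictly above an $\cL'$-clique sits in a maximal clique meeting $A\cup K$ and is dominated by $A\cup K$; every vertex of some $V(C)$ with $C\in\cL'(A\cup K)$ is dominated by $D_C^{\star}$ through Property~\ref{property:domination} and Property~\ref{property:dom} (the only vertex $D_C(y_x)$ may miss is $y_x$, which is dominated by $x$); and every $v\in Uncov(A\cup K)$ lies, by condition~(1), in some $C\in\cL'(A\cup K)$ and is dominated there, using $v\neq y_x$ because $v\notin N[A\cup K]$. For the private-neighbour requirement, the newly added vertices inherit private neighbours from the irredundancy of each block (Property~\ref{property:irredundant}), while Property~\ref{property:cliquetree} keeps distinct $\cL'$-blocks from interfering; each $x\in A\cup K_1$ retains $y_x$ as a private neighbour because every block $D_C(y_x)$ with $y_x\in C$ avoids $N[y_x]$ by construction yet, by (S1), still dominates $N(y_x)\cap V(C)$, and (S2) covers the remaining vertices of $N[y_x]\cap Uncov(A\cup K)$.

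The hard part will be this last verification in the sufficiency direction. I expect the genuine obstacle to be showing that the locally defined blocks $D_C(\cdot)$ fit together into one globally feasible set: that no clique is forced to host two distinct designated private neighbours, that no block dominates a private neighbour reserved by another vertex of $A\cup K_1$, and that the two clauses (S1)--(S2) of safety are exactly what is needed to protect every $y_x$ while simultaneously dominating all of $Uncov(A\cup K)$. Carrying this out will require combining the irredundancy and domination statements of Properties~\ref{property:irredundant}--\ref{property:dom} with the separation between incomparable subtrees in Property~\ref{property:cliquetree}, together with the placement $y_x\in V(C(x))$ from Lemma~\ref{lem:lowerpn}, in order to rule out the interactions between blocks.
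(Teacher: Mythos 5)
Your proposal is correct and follows essentially the same route as the paper: necessity via Lemma~\ref{lem:extension-safe} for condition~(2) and the definition of a $(K_1,K_2)$-extension for condition~(1), and sufficiency by fixing one safe private neighbour per vertex of $A\cup K_1$ and assembling $D$ from $A\cup K$ together with the blocks $D_C(y)$ on the cliques of $\cL'(A\cup K)$ hosting a designated private neighbour and $D_C(w)$, $w\notin C$, on the rest, verified through Properties~\ref{property:cliquetree}--\ref{property:dom} and (S1)--(S2). The only differences are presentational: you spell out the clique-tree argument behind necessity of~(1) and invoke Lemma~\ref{lem:lowerpn} to place the chosen private neighbours, both of which the paper leaves implicit.
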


\begin{proof} Let $A$ be a $(K_1,K_2)$-extendable partial antichain. If (1) is not satisfied, there is a vertex $z\in Uncov(A\cup K)$ that is not included in
  any clique of $\cL'(A\cup K)$, and by definition of $(K_1,K_2)$-extension no $(K_1,K_2)$-extension of $A$ can dominate it.  So (1) is always satisfied. Now, if (2) is not satisfied, there is a non-safe vertex $x$ in
  $A\cup K_1$, thus all $y\in P(A\cup K,x)$ are non-safe.  By Lemma \ref{lem:extension-safe} it follows that $P(D,x)=\emptyset$ for each $(K_1,K_2)$-extension $D$ of $A$ that is a dominating set, and then (2)
  is always satisfied.

  Suppose now that the two conditions hold.  For each $x\in A\cup K_1$ let us choose one safe private neighbor and let us denote the set of all these safe private neighbors by $S$.  We consider a
  $(K_1,K_2)$-extension $D$ generated from $A\cup K$ as follows. First of all notice that from the definition of private neighbor and safety for each $C\in \cL'(A\cup K)$, $|C\cap S|\leq 1$. So, let
  $\cL_1:=\{C\in \cL'(A\cup K)\mid |C\cap S| = 1\}$ and $\cL_0:=\{C\in \cL'(A\cup K)\mid |C\cap S| =0\}$. It is clear that $\{\cL_0,\cL_1\}$ is a bipartition of $\cL'(A\cup K)$. Let $z\in S$. Now let
  \begin{align*} D & := (A\cup K)\cup \left(\bigcup\limits_{C\in \cL_1, C\cap S=\{y\}} D_C(y)\right) \cup \left(\bigcup\limits_{C\in \cL_0} D_C(z)\right). \end{align*} 

  $D$ is clearly a $(K_1,K_2)$-extension of $A$. By definition of $D_C(y)$ for each vertex $x\in D\setminus (A\cup K)$ we have that $P(D,x)\ne \emptyset$. It is moreover easy to check that for each
  $x\in A\cup K_1$, we have that $S\cap P(A\cup K,x) \in P(D,x)$.  Thus, from Property \ref{property:cliquetree}, $P(D,x)\ne \emptyset$ for all $x\in D\setminus K_2$. Each vertex in $N[A\cup K]$ is
  dominated.  Moreover, since for each $C\in \cL_0$ we have $z\notin C$, by definition of $D_C(z)$ we have $V(C)$ is also dominated.  Now, let $C\in \cL_1$ and let $C\cap S=\{y\}$. We know from
  Property \ref{property:domination} that $V(C)\setminus N[y]$ is dominated by $D_C(y)$ and $y$ is dominated by $A\cup K$ since $y$ is safe for some vertex in $A\cup K_1$. So, it remains to show
  that $N(y)\cap V(C)$ is dominated. By the definition of safety we know that the two conditions (S1) and (S2) are satisfied, \ie $N(y)\cap V(C)$ is dominated.
\end{proof}

As a corollary  we have the following.

\begin{lemma}\label{lem:check-ext-poly} For any partial antichain $A$ one can check in polynomial time whether $A$ is $(K_1,K_2)$-extendable.
\end{lemma}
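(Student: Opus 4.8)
The plan is to derive Lemma~\ref{lem:check-ext-poly} directly as a corollary of Lemma~\ref{lem:extend-pa}, which reduces $(K_1,K_2)$-extendability to verifying the two conditions: (1) every vertex in $Uncov(A\cup K)$ lies in some clique of $\cL'(A\cup K)$, and (2) every vertex in $A\cup K_1$ is safe. So the task is to show that both conditions can be tested in time polynomial in $\size{G}$. First I would argue that the underlying combinatorial data can be computed efficiently. A clique tree $\cT$ of $G$ can be computed in linear time, and the pre-order numbering together with the maps $C(\cdot)$, $f(\cdot)$, $Pa(\cdot)$, and the interval representation of each subtree $\cT(C)$ are all computable in polynomial time. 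From these, for a given partial antichain $A$ one computes $\cC(A\cup K)$, then $Up(A\cup K)$, then $N[A\cup K]$, and finally $Uncov(A\cup K) = Up(A\cup K)\setminus N[A\cup K]$; each of these is a straightforward traversal of the tree or of the vertex set and costs at most polynomial time. Likewise $\cL'(A\cup K)$ is obtained by the displayed maximization over the tree and is computable in polynomial time.

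Next I would handle condition~(1): once $Uncov(A\cup K)$ and $\cL'(A\cup K)$ are in hand, checking that each $z\in Uncov(A\cup K)$ belongs to some clique $C\in \cL'(A\cup K)$ is a membership test over $O(|V_G|)$ vertices against $O(|V_G|)$ cliques, hence polynomial. For condition~(2) the key observation is that the definition of \emph{safe} is itself algorithmic: a vertex $x\in A\cup K_1$ is safe iff it has a safe private neighbor $y$, and testing whether a given $y\in P(A\cup K,x)$ is safe amounts to checking the explicit conditions (S1) and (S2). Each of these references the sets $D_C(y)$, which by their definition (a choice of $Z$ meeting the cliques of $\cF(C,y)$ plus a greedily obtained minimal dominating set of a prescribed induced subgraph) can be constructed in polynomial time, since the greedy rule runs in polynomial time and $\cF(C,y)=\{C'\preceq_\cT C : C'\in\cL'(y)\}$ is polynomially computable. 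With $D_C(y)$ available, (S1) is a containment check of $N(y)\cap V(C)$ in $N[D_C(y)]$ over the relevant cliques $C\in\cL'(D\cup K)$ containing $y$, and (S2) is a search, for each $z\in N[y]\cap Uncov(A\cup K)$, of a witnessing clique; both are polynomial. Iterating the safe-private-neighbor test over all $y\in P(A\cup K,x)$ and all $x\in A\cup K_1$ multiplies only polynomial factors, so condition~(2) is decidable in polynomial time.

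I would then conclude by combining the two checks: compute the auxiliary sets once, verify (1) and (2) as above, and answer ``yes'' precisely when both hold, which by Lemma~\ref{lem:extend-pa} is equivalent to $A$ being $(K_1,K_2)$-extendable; the total running time is a product and sum of polynomially many polynomial-time steps and is therefore polynomial in $\size{G}$. The main obstacle I anticipate is not any single step but the bookkeeping needed to justify that $D_C(y)$ is well-defined and computable in polynomial time for the purposes of evaluating (S1) and (S2): one must be careful that the choice of $Z$ in the definition of $D_C(y)$ (one vertex per clique of $\cF(C,y)$, lying in the corresponding $f(C')$) can indeed be made greedily and that the greedily obtained minimal dominating set of the residual induced subgraph is genuinely polynomial-time computable, so that the safety predicate is effective rather than merely existential. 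Once that is secured, the corollary is immediate.
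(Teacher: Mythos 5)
Your proof is correct and follows essentially the same route as the paper: reduce to the two conditions of Lemma~\ref{lem:extend-pa} and observe that each, including the safety predicate via explicit construction of the sets $D_C(y)$, is checkable in polynomial time. The only cosmetic difference is that the paper additionally invokes Lemma~\ref{lem:lowerpn} to restrict the candidate safe private neighbors of $x$ to $V(C(x))\cap P(A\cup K,x)$, whereas you iterate over all of $P(A\cup K,x)$ — both yield polynomial time.
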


\begin{proof} By Lemma \ref{lem:extend-pa} it is enough to check if~ (1) all vertices in $A\cup K_1$ are safe and~ (2) each vertex in $Uncov(A\cup K)$ is included in a clique in $\cL'(A\cup K)$.
  Since (2) can be easily checked in polynomial time from $G$ and a clique tree of $G$, it remains to show that (1) can be checked in polynomial time. A vertex $x\in A\cup K_1$ is safe if either $x\in
  P(A\cup K_1,x)$ or there exists a safe $y\in V(C(x))\cap P(A\cup K_1,x)$ by Lemma \ref{lem:lowerpn}. But by the definition of safety for each $y\in V(C(x))\cap P(A\cup K_1,x)$ the conditions (S1)
  and (S2) are of course checkable in polynomial time from $G$ and a clique tree of $G$.
\end{proof}

\section{The Algorithm} \label{sec:algo}

As in the previous section let us assume we are given a chordal graph $G$ and a clique tree $\cT$ of $G$ rooted at $C_r$. Remind that for a subset $S$ of $V_G$ the \emph{top antichain} of $S$ denoted
by $A(S)$ is the set of vertices of $S$ included in the upmost cliques in $\cC(S)$ that are not descendants of any other in $\cC(S)$, i.e., $A(S) := \{ v\in S \mid C(v) \mbox{ is in}
\max\limits_{\preceq_\cT}\{ \cC(S)\} \}$.  We observe that for any minimal dominating set $D$ of $G$, its top antichain is an $(\emptyset,\emptyset)$-extendable antichain.  Moreover, $D\setminus
A(D)$ is composed of vertices below $A(D)$, i.e., any vertex in $D\setminus A(D)$ is included in $V(C)\setminus C$ for some $C\in \cC(D)$.  Using this, we partition the minimal dominating sets
according to their top antichains.  Since these top antichains are $(\emptyset,\emptyset)$-extendable, we enumerate all $(\emptyset,\emptyset)$-extendable antichains, and for each
$(\emptyset,\emptyset)$-extendable antichain $A$, enumerate all minimal dominating sets whose top antichain is $A$.  As by definition of $(K_1,K_2)$-extendable for some disjoint $K_1,K_2\subseteq
C_r$, for each $(\emptyset,\emptyset)$-antichain $A$ there is at least one minimal dominating set whose top antichain is $A$.  Therefore, each output $(\emptyset,\emptyset)$-antichain will give rise
to a solution.  This is one of the key to polynomial delay.


Now for a minimal dominating set $D$ and a clique $C\in \cC(A(D))$, each vertex $x$ in $D\cap (V(C)\cup C)$ cannot have a private neighbor in another $G[V(C')\cup C']$ for some other $C'\in
\cC(A(D))$. Therefore, we can treat each $G[V(C)\cup C]$ independently. However, for each $C\in \cC(A(D))$ the set $D\cap (V(C)\cup C)$ is not necessarily a minimal dominating set of $G[V(C)\cup C]$
since $D\cap C$ may be equal to a singleton $\{x\}$ with $x$ having a private neighbor in $Up(A(D))$. In such cases we are looking in $G[V(C)\cup C]$ a dominating set $D'$ of $G[V(C)\cup C]$
containing $x$ where $x$ does not necessarily have a private neighbor, but all the other vertices in $D'$ do, \ie $D'$ is a feasible $(\{x\},\emptyset)$-extension in $G[V(C)\cup C]$ with clique tree
$\cT(C)$.  This situation is what exactly motivated the notion of $(K_1,K_2)$-extensions.  

Assume now we are given a pair $(K_1,K_2)$ of disjoint sets in $C_r$ and a $(K_1,K_2)$-extendable antichain $A$. Now contrary to $(\emptyset,\emptyset)$-antichains we can have a vertex $x$ in
$K:=K_1\cup K_2$ that belongs to several cliques in $A$. So we cannot independently make recursive calls in $G[V(C)\cup C]$ for each $C\in \cC(A)$. But, for each feasible $(K_1,K_2)$-extension of $A$ and
each $C\in \cC(A)$ the set $D\cap (V(C)\cup C)$ is a feasible $(K_C^1, K_C^2)$-extension of $G[V(C)\cup C]$ for some disjoint $K_C^1$ and $K_C^2$ in $(A\cup K)\cap C$. Now the whole task is to define for each
$C\in \cC(A)$ the sets $K_C^1$ and $K_C^2$ in $(A\cup K)\cap C$ in such a way that by combining all these feasible $(K_C^1,K_C^2)$-extensions we obtain a feasible $(K_1,K_2)$-extension of $A$, and also any
feasible $(K_1,K_2)$-extension can be obtained in that way. Actually, the way of setting $K_C^1$ and $K_C^2$ is the key, and is described below.  After defining $K_C^1$ and $K_C^2$ we will be able to
enumerate all the feasible $(K_C^1,K_C^2)$-extensions in $G[V(C)\cup C]$ in the same way.

In summary, our enumeration strategy is composed of nested enumerations: enumeration of $(K_1,K_2)$-extendable antichains, for each $(K_1,K_2)$-extendable antichain $A$ and each $C\in \cC(A)$ define
$K_C^1$ and $K_C^2$ and enumerate all the feasible $(K_C^1,K_C^2)$-extensions, and finally the combinations of all these $(K_C^1,K_C^2)$-extensions.  Since any minimal dominating set is a feasible extension
of some $(\emptyset,\emptyset)$-extendable antichain, the completeness of the enumeration is trivial. The rest of the section is as follows. We first show how to enumerate $(K_1,K_2)$-extendable
antichains for some fixed $(K_1,K_2)$. Then we show, given a $(K_1,K_2)$-extendable antichain $A$, how to define $K_C^1$ and $K_C^2$ for each $C\in \cC(A)$ and how to combine all the feasible
$(K_C^1,K_C^2)$-extensions in order to obtain all feasible $(K_1,K_2)$-extensions of $A$.  Before assuming that we can perform both tasks with polynomial delay and use only polynomial space let us
show that we can enumerate with polynomial delay and use polynomial space all the feasible $(K_1,K_2)$-extensions. 

\subsection{Enumeration of $(K_1,K_2)$-Extensions}

This subsection deals with the algorithm for enumerating all the feasible $(K_1,K_2)$-extensions, including the case of the root of the recursion.  As we explained, the algorithm is composed of
$(K_1,K_2)$-extendable antichain enumeration and of the enumeration of combinations of the feasible $(K_C^1,K_C^2)$-extensions for appropriate $(K_C^1,K_C^2)$.  It can be described as follows.

\begin{framed}
\begin{small}
\begin{tabbing}
{\bf Algorithm} {\sf EnumKExtension}$(G, \cT, K_1, K_2)$\\
\ \ \ \ \ \ \ $G$:graph, $\cT$:clique tree, $K_1,K_2$: disjoint subsets of $C_r$ root of $\cT$\\
1. {\bf for} each antichain $A$ output by {\sf EnumAntichain}$(G, \cT, K_1, K_2, \emptyset)$ {\bf do}\\
2. \ \ \ {\bf output} each solution of {\sf EnumCombination}$(G, \cT, K_1, K_2, A, A\cup K)$\\
3. {\bf end for}
\end{tabbing}
\end{small}
\end{framed}

Assume that {\sf EnumAntichain}$(G, \cT, K_1, K_2, \emptyset)$ enumerates all $(K_1,K_2)$-extendable antichains (Lemma \ref{lem:correctness-enum-antichain}) and {\sf EnumCombination}$(G, \cT, K_1,
K_2, A, A\cup K)$ enumerates all feasible $(K_1,K_2)$-extensions of $A$ (Lemma \ref{lem:correctness-enum-combination}), both with polynomial delay and use polynomial space. Then we have the following.

\begin{theorem}\label{thm:t1} The call {\sf EnumKExtension} $(G,\cT,K_1,K_2)$ enumerates all feasible $(K_1,K_2)$-extensions in polynomial delay and uses polynomial space.
\end{theorem}

\begin{proof} By definition for every feasible $(K_1,K_2)$-extension $D$ the top antichain $A(D\setminus K)$ is a $(K_1,K_2)$-extendable antichain. So by Lemmas \ref{lem:correctness-enum-antichain}
  and \ref{lem:correctness-enum-combination} below every feasible $(K_1,K_2)$-extension is output. Therefore, {\sf EnumKExtension} $(G,\cT,K_1,K_2)$ enumerates all feasible $(K_1,K_2)$-extensions.
  From the definition of $(K_1,K_2)$-extendable antichains every call in Step 1 outputs at least one feasible $(K_1,K_2)$-extension.  Now since {\sf EnumAntichain}$(G, \cT, K_1, K_2, \emptyset)$ and\\
  {\sf EnumCombination}$(G, \cT, K_1,K_2, A, A\cup K)$ runs with polynomial delay and use both polynomial space we can conclude that {\sf EnumKExtension} $(G,\cT,K_1,K_2)$ runs with polynomial delay
  and use polynomial space.
\end{proof}

\subsection{Enumeration of Antichains}

Our strategy is to enumerate all $(K_1,K_2)$-extendable partial antichains
 by an ordinary backtracking algorithm, that repeatedly appends a
  vertex to the current solution that is larger than its tail.
In this algorithm, any $(K_1,K_2)$-extendable partial antichain $A$ is
 obtained from $A\setminus tail(A)$.
Since $A\setminus tail(A)$ is a prefix of $A$, any $(K_1,K_2)$-extendable 
 partial antichain is generated from another $(K_1,K_2)$-extendable partial 
 antichain.
This implies that the set of $(K_1,K_2)$-extendable partial antichains satisfies
 a kind of monotone property, and thus we can enumerate all
 $(K_1,K_2)$-extendable partial antichains with passing through only
 $(K_1,K_2)$-extendable partial antichains.
The algorithm is described as follows.

\begin{framed}
\begin{small}
\begin{tabbing}
{\bf Algorithm} {\sf EnumAntichain}$(G, \cT, K_1, K_2, A)$\\
\ \ \ \ \ \ \ $G$:graph, $\cT$:clique tree, $K_1,K_2$: disjoint subsets of $C_r$ root of $\cT$\\ \ \ \ \ \ \ \ \ $A$:$(K_1,K_2)$-extendable partial antichain\\
1. {\bf if} $A$ is an antichain {\bf then output} $A$;\\
2. {\bf for} each vertex $z>tail(A)$ {\bf do}\\
3. \ \ {\bf if} $A\cup \{ z\}$ is a $(K_1,K_2)$-extendable partial antichain {\bf then}\\
\ \ \ \ \ {\bf call} {\sf EnumMinAntichain}$(G, \cT, K_1, K_2, A\cup \{z\})$\\
4. {\bf end for}
\end{tabbing}
\end{small}
\end{framed}

\begin{lem}\label{lem:correctness-enum-antichain}
The call {\sf EnumAntichain}$(G, \cT, K_1, K_2, \emptyset)$ enumerates all
 $(K_1,K_2)$-extendable antichains in polynomial delay with polynomial space.
\end{lem}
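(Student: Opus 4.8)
The plan is to verify the three required properties of any enumeration-via-backtracking algorithm: completeness (every $(K_1,K_2)$-extendable antichain is output), correctness without repetition (nothing else is output and nothing twice), and the complexity bounds (polynomial delay and polynomial space). The key structural fact that makes all of this work is the monotonicity already established in the surrounding text: if $A$ is a $(K_1,K_2)$-extendable partial antichain then so is its prefix $A\setminus\{tail(A)\}$. I will take this prefix-closure as the backbone of the argument.

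\medskip

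\textbf{Completeness and no repetition.} First I would argue that the recursion tree of the call {\sf EnumAntichain}$(G,\cT,K_1,K_2,\emptyset)$ is exactly the tree of $(K_1,K_2)$-extendable partial antichains ordered by the prefix relation, with $\emptyset$ at the root. The crucial point is Step 3: a recursive call is made on $A\cup\{z\}$ precisely when $z>tail(A)$ and $A\cup\{z\}$ is a $(K_1,K_2)$-extendable partial antichain, and by Lemma \ref{lem:check-ext-poly} this test is decidable in polynomial time. I would show by induction on $|A|$ that every $(K_1,K_2)$-extendable partial antichain $A$ is reached exactly once: its unique parent in the recursion is $A\setminus\{tail(A)\}$, which is itself $(K_1,K_2)$-extendable by the monotone property, so the induction hypothesis applies to it, and the loop in Step 2 generates $A$ from it via $z=tail(A)$ at exactly one iteration (since $tail(A)$ is uniquely determined and the condition $z>tail(A\setminus\{tail(A)\})$ is met). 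Because each partial antichain has a single prefix-parent, no partial antichain is generated twice, and since Step 1 outputs $A$ exactly when $A$ is a (full) antichain, every $(K_1,K_2)$-extendable antichain is output exactly once. I would also note that the monotone property is what guarantees we never descend into a dead branch: every node visited is extendable, so every maximal branch terminates at a genuine antichain and the search does not waste time on fruitless subtrees.

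\medskip

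\textbf{Polynomial delay and polynomial space.} For the delay I would use the standard argument for backtracking over a prefix-closed family whose membership is polynomial-time testable. At each node $A$ the work is: the extendability test of Step 1 (polynomial by Lemma \ref{lem:check-ext-poly}), plus a scan over all vertices $z>tail(A)$ in Step 2, each incurring one extendability test on $A\cup\{z\}$; this is polynomial in $\size{G}$ per node. Since each internal node of the recursion tree has at least one extendable child (because the family is prefix-closed and every partial antichain extends toward some full antichain — here I would invoke that each $(K_1,K_2)$-extendable antichain yields at least one feasible extension, hence partial antichains are never dead ends), the depth of the recursion is at most $|V_G|$ and the time elapsed between two consecutive outputs is bounded by the cost along one root-to-leaf-to-root traversal, namely $\cO(|V_G|)$ nodes times the per-node polynomial cost. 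For space, the recursion stack holds at most $|V_G|$ frames, each storing a partial antichain of size at most $|V_G|$ together with the data needed to resume Step 2, so the total space is polynomial in $\size{G}$.

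\medskip

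\textbf{Main obstacle.} The delicate point — and the one I expect to require the most care — is ensuring polynomial delay rather than merely output-polynomial time, because a naive backtracking can spend exponential time in a subtree that ultimately produces few outputs. The safeguard is precisely that {\it every} $(K_1,K_2)$-extendable partial antichain is extendable to a full antichain that is then output by Step 1, so no branch is explored in vain; I would make this rigorous by charging the cost of each visited node to the nearest output below it and bounding the number of nodes on any root-to-leaf path by $|V_G|$. A secondary subtlety is the correct handling of the vertex ordering and of $tail$: I must confirm that the ``larger than its tail'' rule in Step 2, combined with the pre-order DFS numbering of $\cT$, produces exactly the partial antichains and never a set that is not a prefix of any antichain, which is exactly what the prefix/partial-antichain definitions in Section \ref{sec:preliminary} guarantee. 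Apart from these points the argument is routine.
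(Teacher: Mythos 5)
Your proposal is correct and takes essentially the same route as the paper's proof: prefix-closure of the family of $(K_1,K_2)$-extendable partial antichains gives generation without repetition from the unique parent $A\setminus \{tail(A)\}$, the existence of a feasible extension of every extendable partial antichain (whose top antichain has $A$ as a prefix) guarantees that no branch is a dead end, and the delay and space are bounded by the recursion depth $|V_G|$ times the polynomial per-iteration cost from Lemma~\ref{lem:check-ext-poly}. The only difference is presentational: you spell out the charging argument for the delay slightly more explicitly than the paper does.
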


\begin{proof} We observe that for any $(K_1,K_2)$-extendable partial antichain $A$, $A\setminus tail(A)$ is a $(K_1,K_2)$-extendable partial antichain.  Thus, one can easily prove by induction that
  the iteration inputting $A$ is recursively called only by the iteration inputting $A\setminus tail(A)$. Therefore, all $(K_1,K_2)$-extendable partial antichains are generated by this algorithm
  without repetition.  For a $(K_1,K_2)$-extendable partial antichain $A$, there is at least one feasible $(K_1,K_2)$-extension $D$.  By the definition of a feasible $(K_1,K_2)$-extension,
  $A(D\setminus K)$ is a $(K_1,K_2)$-extendable antichain with $A$ as a prefix.  This implies that at least one descendant of any iteration outputs an antichain, and every leaf of the recursion tree
  outputs an antichain.  Then, the delay is bounded by the maximum computation time of an iteration multiplied by the depth of the recursion.  The depth is at most $|V_G|$, thus the algorithm is
  polynomial delay since the loop at Step 2 runs at most $n$ times and the $(K_1,K_2)$-extendability check can be done in polynomial time by Lemma \ref{lem:check-ext-poly}. Since the depth is bounded
  by $|V_G|$, the algorithm uses obviously a polynomial space.
\end{proof}

\subsection{Enumeration of Combinations}

We now show, given a $(K_1,K_2)$-extendable antichain $A$, how to enumerate with polynomial delay and use only polynomial space all feasible $(K_1,K_2)$-extensions of $A$ by computing for each $C\in
\cC(A)$ all the $(K_C^1,K_C^2)$-extensions of $G[V(C)\cup C]$ for appropriate $K_C^1$ and $K_C^2$ and combine all of them. Note that the set $A$ is the top antichain of any feasible
$(K_1,K_2)$-extension if and only if if the $(K_1,K_2)$-extension is that of $A$.  For pruning redundant partial combinations, we introduce the notion of a partial $(K_1,K_2)$-extension.  A vertex set
$D\supseteq A\cup K$ is called a \emph{partial $(K_1,K_2)$-extension} of $A$ if there is a feasible $(K_1,K_2)$-extension $D'$ of $A$ such that $D\setminus (A\cup K)$ is a prefix of
$D'\setminus (A\cup K)$, and all the vertices in $V(C(x))$ for $x\in A$ is dominated by $D$ if $x$ is smaller than $tail(D\setminus (A\cup K))$. Our strategy is to enumerate all partial
$(K_1,K_2)$-extensions of $A$, similar to the antichain enumeration.  For a partial $(K_1,K_2)$-extension $D$ of $A$, let $C^*(D)$ be the largest clique $C$ in $\cC(A)$ such that
$(D\setminus (A\cup K))\cap V(C)\ne \emptyset$, and $C_*(D)$ be the smallest clique $C$ in $\cC(A)$ such that a vertex in $V(C)$ is not dominated by $D$.  Informally $C^*(D)$ is the last clique $C\in
\cC(A)$ such that $V(C)$ is dominated by $D$, and $C_*(D)$ the first clique in $\cC(A)$ such that $V(C)$ is not dominated by $D$. To enumerate all partial $(K_1,K_2)$-extensions of $A$
and in fine all $(K_1,K_2)$-extensions of $A$, we start from $D = A\cup K$ and repeatedly add a $(K_{C_*(D)}^1,K_{C_*(D)}^2)$-extension of $G[V(C_*(D))\cup C_*(D)]$ to $D$ for appropriate
$(K_{C_*(D)}^1,K_{C_*(D)}^2)$, while keeping the extendability.  To characterize the possible $(K_{C_*(D)}^1,K_{C_*(D)}^2)$ we state the following lemma.  Let $Q_D(C')$ be the vertices $x$ in
$K\cup A$ that have no safe private neighbor in $V(C)\cup C, C>C'$, and none of its private neighbor in $P(K\cup A\cup D, x)$ is included in $Up(A)\setminus C'$ or in $V(C), C<C'$. In other words
$Q_D(C')$ is the set of vertices in $K\cup A$ that we must give a private neighbor in $V(C')\cup C'$ for any $(K_1,K_2)$-extension of $A$ containing $D$. 

\begin{lemma}\label{lem:comb-parent} For a non-empty partial $(K_1,K_2)$-extension $D$, $D\cap(V(C^*(D))\cup C^*(D))$ is a feasible $(K_1',K_2')$-extension in $G[V(C^*(D))\cup C^*(D)]$ where $K_1' =
  Q_D(C^*(D))$ and $K_2' = ((A\cup K)\cap C^*(D))\setminus K_1'$.
\end{lemma}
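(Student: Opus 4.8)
The plan is to unpack the definitions of $Q_D(C^*(D))$, partial $(K_1,K_2)$-extension, and feasible $(K_1',K_2')$-extension, and to verify the three defining requirements of a feasible extension for the restricted set $D \cap (V(C^*(D)) \cup C^*(D))$ in the subgraph $G[V(C^*(D)) \cup C^*(D)]$ equipped with clique tree $\cT(C^*(D))$. Write $C := C^*(D)$ for brevity. I first observe that $K_1' \cup K_2' = (A\cup K)\cap C$, so the two sets $K_1'$ and $K_2'$ are disjoint subsets of the root $C$ of the subtree $\cT(C)$, as required for a $(K_1',K_2')$-pair. I then need to check that $D \cap (V(C)\cup C)$ is (i) a $(K_1',K_2')$-extension, i.e.\ it contains $(A\cup K)\cap C = K_1'\cup K_2'$ and its remaining vertices lie below the cliques in $\cL'$ computed inside $G[V(C)\cup C]$; (ii) a dominating set of $G[V(C)\cup C]$; and (iii) irredundant in the appropriate sense, namely $P(D',x)\ne\emptyset$ for all $x\in D'\setminus K_2'$ where $D' := D\cap (V(C)\cup C)$.

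For the containment and the shape of the extension in (i), I would use Property \ref{property:cliquetree}(2): since distinct cliques of $\cC(A)$ are incomparable in $\cT$, the vertices of $D\setminus(A\cup K)$ falling into $V(C)\setminus C$ are non-adjacent to vertices in the other $V(C')$, so restricting $D$ to $V(C)\cup C$ does not break the descent condition defining a $(K_1,K_2)$-extension. For domination (ii), I would invoke the definition of $C^*(D)$: it is the largest clique in $\cC(A)$ that $D\setminus(A\cup K)$ touches, and combined with the partial-extension hypothesis (all of $V(C(x))$ is dominated when $x<tail(D\setminus(A\cup K))$) this forces $V(C)$ to be entirely dominated by $D$; I then argue that the domination of $V(C)\cup C$ is actually witnessed by vertices in $D'$ itself, again using Property \ref{property:cliquetree} to rule out domination coming from a vertex in an incomparable clique.

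The crux, and the step I expect to be the main obstacle, is the irredundancy condition (iii), where the precise engineering of $Q_D(C)$ enters. The point is exactly which vertices of $(A\cup K)\cap C$ are allowed to lack a private neighbor inside $G[V(C)\cup C]$: those are precisely the ones that will receive a private neighbor elsewhere, and these must be placed into $K_2'$ rather than $K_1'$. By the definition of $Q_D(C)$, a vertex $x\in K_1'$ is one that has no safe private neighbor in any higher clique $V(C')\cup C'$ with $C'>C$ and none of whose private neighbors lie in $Up(A)\setminus C$ or in a lower clique; hence any feasible $(K_1,K_2)$-extension refining $D$ is obligated to supply a private neighbor for $x$ within $V(C)\cup C$, so demanding $P(D',x)\ne\emptyset$ for $x\in K_1'$ is legitimate. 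Conversely, every $x\in K_2' = ((A\cup K)\cap C)\setminus K_1'$ is, by negating the $Q_D(C)$ condition, guaranteed a safe private neighbor outside $V(C)\cup C$ (or already below), so it correctly belongs to the ``already assigned'' set $K_2'$ and need not have a private neighbor in $D'$. For the remaining vertices $x\in D'\setminus(A\cup K)$, irredundancy transfers directly from the feasibility of the global extension $D$ together with Property \ref{property:cliquetree}(2), since a private neighbor of such an $x$ cannot lie in an incomparable clique and therefore survives the restriction to $V(C)\cup C$. Assembling these, $D'$ satisfies all three conditions and is a feasible $(K_1',K_2')$-extension, as claimed. The delicate bookkeeping will be matching each clause in the definition of $Q_D(C)$ to the correct bucket ($K_1'$ versus $K_2'$) and confirming no private neighbor is double-counted or lost across the clique boundary.
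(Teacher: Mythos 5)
Your proposal is correct and follows essentially the same route as the paper's (much terser) proof: both verify the feasibility conditions directly from the definitions, deriving domination of $V(C^*(D))$ from the partial-extension condition together with the choice of $C^*$, and deriving the irredundancy requirement from the fact that every vertex of $Q_D(C^*(D))$ lies in $C^*(D)$ and can only have its private neighbor inside $V(C^*(D))\cup C^*(D)$. Your write-up simply spells out the bookkeeping (the split into $K_1'$ versus $K_2'$ and the use of Property \ref{property:cliquetree}(2) to keep private neighbors from crossing incomparable cliques) that the paper leaves implicit.
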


\begin{proof} By definitions of partial $(K_1,K_2)$-extension and of $C^*$, $D\cap (V(C^*(D)\cup C^*(D))$ dominates $V(C^*(D))$.  Moreover, every vertex $x$ in $Q_D(C^*(D))$ has a private neighbor only in
  $V(C^*(D))\cup C^*(D)$, and moreover  $x\in C^*(D)$.  Thus, the statement holds.
\end{proof}

\begin{lem}{\label{lem:comb-desc}} Let $D$ be a partial $(K_1,K_2)$-extension of $A$ and suppose that $C_*(D)$ exists.  For any feasible $(K_1',K_2')$-extension $D'$ in $G[V(C_*(D))\cup C_*(D)]$
  where $K_1' = Q_D(C_*(D))$, $K_2' = ((A\cup K)\cap C_*(D))\setminus K_1'$, $D\cup D'$ is a partial $(K_1,K_2)$-extension of $A$.
\end{lem}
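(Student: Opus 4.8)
The plan is to build, out of a witness for $D$, a witness for $D\cup D'$ by performing a local surgery on the clique $C_*(D)$. Since $D$ is a partial $(K_1,K_2)$-extension, I first fix a feasible $(K_1,K_2)$-extension $F$ of $A$ such that $D\setminus(A\cup K)$ is a prefix of $F\setminus(A\cup K)$. My initial step is to pin down where $F$ and $D$ can differ: for every $C\in\cC(A)$ with $C<C_*(D)$ the set $V(C)$ is already dominated by $D\subseteq F$ (by minimality of $C_*(D)$), so a hypothetical surplus vertex of $F$ in such a subtree would be dominated by a vertex of $D$ and could therefore own no private neighbour, contradicting the feasibility of $F$. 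Hence $F$ coincides with $D$ on $\bigcup_{C<C_*(D)}(V(C)\cup C)$, and every remaining added vertex of $F$ lies in a clique $\ge C_*(D)$ of $\cC(A)$.

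Next I set $E:=\big(F\setminus(V(C_*(D))\cup C_*(D))\big)\cup D'$, excising the part of $F$ carried by $C_*(D)$ and gluing in $D'$. As $D'\supseteq(A\cup K)\cap(V(C_*(D))\cup C_*(D))$ we keep $E\supseteq A\cup K$, so $E$ is a $(K_1,K_2)$-extension of $A$. For the prefix requirement, $E$ and $D\cup D'$ agree on all cliques $\le C_*(D)$ (both equal $D$ on the cliques below $C_*(D)$ and $D'$ on $C_*(D)$), whereas the surplus vertices of $E$ belong to cliques $>C_*(D)$; since $\cC(A)$ is an antichain, the pre-order numbering places every vertex of those subtrees after $tail\big((D\cup D')\setminus(A\cup K)\big)\in V(C_*(D))$, so $(D\cup D')\setminus(A\cup K)$ is indeed a prefix of $E\setminus(A\cup K)$. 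The private-neighbour bookkeeping is then made to match through the choice of $K_1',K_2'$: by Property \ref{property:cliquetree}(2) the antichain subtrees are pairwise non-adjacent, so every vertex of $F$ sitting in a clique $C\ne C_*(D)$ retains its private neighbour inside $V(C)\cup C$, while $D'$, being feasible in $G[V(C_*(D))\cup C_*(D)]$, supplies a private neighbour to each vertex of $D'\setminus K_2'$. Putting $K_1'=Q_D(C_*(D))$ forces precisely those vertices of $A\cup K$ whose private neighbour is confined to $V(C_*(D))\cup C_*(D)$ to be served by $D'$, and putting $K_2'=((A\cup K)\cap C_*(D))\setminus K_1'$ exempts those already owning a private neighbour elsewhere, which survives untouched in $E$ (mirroring the decomposition of Lemma \ref{lem:comb-parent}).

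The main obstacle is to recover domination of the vertices of $Uncov(A\cup K)$ lying strictly above the antichain, since $D'$ dominates only inside $G[V(C_*(D))\cup C_*(D)]$ and cannot by itself reach such a vertex $z$. Here I would invoke the safety machinery behind Lemma \ref{lem:extend-pa}: feasibility routes the domination of $z$ through a safe private neighbour $y$ of some $x\in A\cup K_1$, for which $z\in N[D_C(y)]$ in a clique $C\in\cL'(A\cup K)$ by condition (S2), that is, $z$ is dominated locally within a single antichain subtree (Properties \ref{property:domination} and \ref{property:dom}). If that subtree is not the one rooted at $C_*(D)$ the domination of $z$ is preserved verbatim in $E$; if it is, then the responsible $x$ belongs to $Q_D(C_*(D))=K_1'$, so the feasibility of $D'$ hands $x$ a safe private neighbour in $C_*(D)$ and re-dominates $z$ through (S2). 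Combining this with the domination of each $V(C)$, $C\ne C_*(D)$, by the preserved part of $F$ and of $V(C_*(D))$ by $D'$ shows that $E$ is a feasible $(K_1,K_2)$-extension of $A$, so $D\cup D'$ satisfies the first clause of being a partial extension. Finally the domination clause is immediate: after adding $D'$ every $V(C)$ with $C\in\cC(A)$, $C\le C_*(D)$, is dominated and the new tail lies in $V(C_*(D))$, so all $V(C(x))$ with $x\in A$ smaller than $tail\big((D\cup D')\setminus(A\cup K)\big)$ are dominated, as required.
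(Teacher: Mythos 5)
Your strategy is genuinely different from the paper's: you perform surgery on an arbitrary witness $F$ for $D$, excising its trace on $V(C_*(D))\cup C_*(D)$ and gluing in $D'$, whereas the paper discards everything beyond $D\cup D'$ and rebuilds the tail of the witness from scratch, setting $D^* := (A\cup K\cup D\cup D')\cup\bigcup_{C\in\cL_1,\,C\cap S=\{y\}}D_C(y)\cup\bigcup_{C\in\cL_0}D_C(z)$ over the cliques of $\cL'(A\cup K)$ beyond $C_*(D)$, exactly as in the constructive half of Lemma \ref{lem:extend-pa}. The point of that construction is that for each $x\in A\cup K$ still owed a private neighbour one \emph{chooses} a safe $y$ and uses $D_C(y)$, which by definition avoids $N[y]$ and therefore provably keeps $y$ private; nothing of the sort is guaranteed by an arbitrary $F$.

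Your version has two genuine gaps. First, the claim that $F$ coincides with $D$ on all cliques below $C_*(D)$ is not justified. The prefix condition only forces surplus vertices of $F$ to exceed $tail(D\setminus(A\cup K))\in V(C^*(D))$, and there may be cliques $C\in\cC(A)$ with $C^*(D)<C<C_*(D)$ (those whose subtree is already dominated by $(A\cup K)\cap C$, so $D$ places nothing there). Your domination argument rules out a private neighbour of a surplus vertex $v\in V(C)$ \emph{inside} $V(C)$, but $v$'s private neighbour may instead lie in $C\setminus f(C)\subseteq Up(A\cup K)$, i.e.\ be an uncovered vertex of $Uncov(A\cup K)$ that $F$ dominates from below $C$; such a $v$ survives into $E$, sits before $tail((D\cup D')\setminus(A\cup K))$, and breaks the prefix property. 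Second, and more seriously, the private-neighbour bookkeeping for $K_2'$ fails: $x\in K_2'=((A\cup K)\cap C_*(D))\setminus Q_D(C_*(D))$ only means a private neighbour for $x$ is \emph{available} outside $V(C_*(D))\cup C_*(D)$ (a safe one in a later clique, or one with respect to $A\cup K\cup D$ elsewhere); it does not mean the particular witness $F$ realises $x$'s private neighbour there. $F$ may give $x$ its only private neighbour inside $V(C_*(D))$, and your surgery replaces that region by $D'$, which promises private neighbours only to $K_1'$ — so $x$ may end up with none in $E$. Both problems disappear in the paper's proof precisely because the witness is rebuilt, not inherited.
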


\begin{proof} As in the proof of Lemma \ref{lem:extend-pa}, we choose one private neighbor for vertices in $A\cup K$ that have safe private neighbors in $V(C), C>C_*(D)$ and let $S$ be the set of
  these selected vertices.  Then we let $\cL_1:=\{C\in \cL'(A\cup K)\mid C>C_*(D), |C\cap S| = 1\}$ and $\cL_0:=\{C\in \cL'(A\cup K)\mid C>C_*(D), |C\cap S| =0\}$.  Let $z\in S$. Now let
  
  \begin{align*} D^* & := (A\cup K \cup D\cup D')\cup \left(\bigcup\limits_{C\in \cL_1, C\cap S=\{y\}} D_C(y)\right) \cup \left(\bigcup\limits_{C\in \cL_0} D_C(z)\right). \end{align*} 
 
  According to the proof of Lemma \ref{lem:extend-pa}, $D^*$ is a feasible $(K_1,K_2)$-extension of $A$.
\end{proof}

We can now describe the algorithm.

\begin{framed}
\begin{small}
\begin{tabbing}
{\bf Algorithm} {\sf EnumCombination}$(G, \cT, K_1, K_2, A, D)$\\
\ \ \ \ \ \ \ $G$:graph, $\cT$:clique tree, $K_1,K_2$: disjoint subsets of $C_r$ root of $\cT$, $A$:$(K_1,K_2)$-extendable antichain\\
\ \ \ \ \ \ \ $D$: a partial $(K_1,K_2)$-extension of $A$\\
1. {\bf if } $C_*(D)$ does not exist {\bf then output} $D$; {\bf return}\\ 
2. $K_1' = Q_D(C_*(D))$, $K_2':= ((A\cup K)\cap C_*(D))\setminus K_1'$\\
3. {\bf for} each $D'$ output by {\sf EnumKExtension}$(G[V(C_*(D))\cup C_*(D)], \cT(C^*(D)), K_1', K_2')$\\
4. \ \ \ {\bf call} {\sf EnumCombination}$(G, \cT, K_1, K_2, A, D\cup D')$\\
5. {\bf end for}
\end{tabbing}
\end{small}
\end{framed}

\begin{lem}\label{lem:correctness-enum-combination} The call {\sf EnumCombination}$(G, \cT, K_1, K_2, A, A\cup K)$ enumerates
  all feasible $(K_1,K_2)$-extensions whose top antichain is $A$ in polynomial delay and uses polynomial space.
\end{lem}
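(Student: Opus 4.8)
The plan is to establish three things for the call {\sf EnumCombination}$(G,\cT,K_1,K_2,A,A\cup K)$: that every output is a feasible $(K_1,K_2)$-extension whose top antichain is $A$ (soundness), that every such extension is output exactly once (completeness and non-repetition), and that the delay and space are polynomial. The whole argument proceeds by induction on $|V_G|$, since the recursive calls to {\sf EnumKExtension} in Step 3 are made on the strictly smaller graphs $G[V(C_*(D))\cup C_*(D)]$; thus I may assume Theorem \ref{thm:t1} for those sub-instances, i.e.\ that {\sf EnumKExtension} enumerates all feasible $(K_1',K_2')$-extensions of each subgraph without repetition, in polynomial delay and polynomial space.

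The backbone of correctness is the invariant that the vertex set $D$ passed to every recursive call is a partial $(K_1,K_2)$-extension of $A$. This holds at the root since $A\cup K$ is one by definition, and is preserved at Step 4 by Lemma \ref{lem:comb-desc}, because the $D'$ produced at Step 3 is, by the inductive correctness of {\sf EnumKExtension}, a feasible $(K_1',K_2')$-extension of $G[V(C_*(D))\cup C_*(D)]$ for exactly the parameters $K_1'=Q_D(C_*(D))$ and $K_2'=((A\cup K)\cap C_*(D))\setminus K_1'$ used in that lemma. For soundness I would observe that whenever Step 1 outputs $D$ the clique $C_*(D)$ does not exist, so every $V(C)$ with $C\in\cC(A)$ is dominated; combining this with the per-clique feasibility of each added $D'$ (private neighbours are guaranteed inside each $V(C_*)\cup C_*$ by feasibility of $D'$, and the vertices of $K_2'$ are precisely those not required to receive one) shows $D$ dominates $V_G$ and has a private neighbour for each $x\in D\setminus K_2$, i.e.\ $D$ is feasible. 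Finally, since every added vertex lies in $\bigcup_{C\in\cL(A\cup K)}V(C)$ and hence in a clique that is not a strict ancestor of any clique of $\cC(A)$, a routine check gives that the topmost cliques of $\cC(D\setminus K)$ are exactly $\cC(A)$, whence $A(D\setminus K)=A$.

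For completeness I would take an arbitrary feasible $(K_1,K_2)$-extension $E$ with $A(E\setminus K)=A$, decompose it as $E_C:=E\cap(V(C)\cup C)$ for $C\in\cC(A)$, and trace the execution while processing the cliques of $\cC(A)$ in increasing order. The key point is that when the current $D$ agrees with $E$ on all cliques smaller than $C_*(D)$, Lemma \ref{lem:comb-parent} guarantees that $E_{C_*(D)}$ is a feasible $(K_1',K_2')$-extension of $G[V(C_*(D))\cup C_*(D)]$ for precisely the parameters computed in Step 2; so by the inductive completeness of {\sf EnumKExtension} the set $E_{C_*(D)}$ occurs among the $D'$ of Step 3, and recursing on $D\cup D'$ advances the agreement by one clique. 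Since adding a feasible $D'$ dominates $V(C_*(D))$ entirely, the next $C_*$ is strictly larger, so after at most $|\cC(A)|$ steps we reach $E$. Non-repetition follows because the map $E\mapsto(E_C)_{C\in\cC(A)}$ identifies the outputs with the product of the per-clique enumerations, which are themselves repetition-free by induction.

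The delay and space bounds then come from a standard backtracking analysis. The recursion tree has depth at most $|\cC(A)|\le n$ because each recursive call strictly increases the smallest undominated clique $C_*$; every leaf emits an output, and since any partial extension $D$ extends to some feasible $E$ whose restriction to $C_*(D)$ is enumerated at Step 3, every internal node has a descendant leaf, so no branch is explored without eventually producing a solution. Hence the delay is bounded by the recursion depth times the per-node cost, the latter consisting of the polynomial computation of $C_*(D)$, $Q_D$, $K_1'$, $K_2'$ plus the polynomial delay of the sub-call to {\sf EnumKExtension} available by induction, which is polynomial overall. For space, the depth-first control keeps only one root-to-node path, each node storing a partial extension of size $O(n)$ together with the polynomial working space of the active {\sf EnumKExtension} sub-call, giving polynomial total space. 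The step I expect to be the main obstacle is the completeness bookkeeping: verifying that the algorithm's locally computed $K_1'=Q_D(C_*(D))$ and $K_2'$ coincide with the parameters under which Lemma \ref{lem:comb-parent} certifies $E_{C_*(D)}$ feasible — in particular that $Q$ is insensitive to vertices added inside $C_*(D)$ itself — so that the target restriction $E_{C_*(D)}$ is genuinely among the enumerated $D'$ and no feasible extension is silently missed.
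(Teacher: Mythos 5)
Your proposal is correct and follows essentially the same route as the paper: Lemma \ref{lem:comb-parent} yields completeness and non-repetition (each partial extension has a unique parent), Lemma \ref{lem:comb-desc} guarantees every branch reaches an output leaf, and the delay is bounded by recursion depth times per-iteration cost. The only substantive difference is that you make explicit the induction on $|V_G|$ needed to ground the mutual recursion with {\sf EnumKExtension}, a point the paper leaves implicit; the subtlety you flag about $Q_D(C_*(D))$ being insensitive to vertices added inside $V(C_*(D))\cup C_*(D)$ is real but resolves exactly as you expect.
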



\begin{proof} From Lemma \ref{lem:comb-parent}, the iteration inputting a partial $(K_1,K_2)$-extension $D$ of $A$ is generated only from the iteration inputting $D\setminus (V(C^*(D)\setminus
  C^*(D)))$.  This assures that the algorithm enumerates all partial $(K_1,K_2)$-extensions of $A$ without duplication.  From Lemma \ref{lem:comb-desc}, there is at least one feasible
  $(K_1,K_2)$-extension $D'$ of $A$ including the partial $(K_1,K_2)$-extension $D$ of $A$ that is the input of the iteration.  Thus, all the leaf iterations of the recursion of this algorithm always
  outputs a feasible $(K_1,K_2)$-extension of $A$.  Now the delay is bounded by the maximum computation time of an iteration multiplied by the depth of the recursion.  The depth is at most $|V_G|$,
  thus the algorithm is polynomial delay since {\sf EnumKExtension} runs with polynomial delay. Since the depth is at most $|V_G|$, the algorithm is obviously polynomial space.
\end{proof}

We are now ready to summarize the proof of our main theorem. 

\begin{proof}[Proof of Theorem \ref{thm:main}] By definition every minimal dominating set of $G$ is a feasible $(\emptyset,\emptyset)$-extension. Therefore,
the call {\sf EnumKExtension} $(G,\cT,\emptyset,\emptyset)$ enumerates all minimal dominating sets  in polynomial delay and uses polynomial space by Theorem \ref{thm:t1}.
\end{proof}

\section{Conclusion} 

We have proved that one can list all the minimal dominating sets of a chordal graph with polynomial delay and polynomial space. We know from \cite{GolovachHKKV14a} that there exists an output-polynomial
algorithm for the listing of minimal dominating sets of any chordal bipartite graph.   It is known that chordal bipartite graphs admit a tree-structure similar to the clique tree of chordal
graphs. Can we adapt our technique to obtain a polynomial delay and polynomial space algorithm for enumerating the minimal dominating sets of any chordal bipartite graph? 

Besides the fact that knowing whether there exists an output-polynomial algorithm for listing the set of minimal dominating sets in a graph is still open, there are some graph classes where a search
for an output-polynomial algorithm deserves to be explored and seems more tractable than the general problem: we can cite bipartite graphs and unit-disk graphs.  

By the reduction in \cite{KanteLMN12} we know that the enumeration of minimal dominating sets in co-bipartite graphs is as hard as the enumeration of minimal dominating sets in all graphs. Can we find
a parameter in co-bipartite graphs that whenever bounded by a function on $n$ (say $\log(n)$)  would summarize the tractability of the enumeration of minimal dominating sets in many graph classes? 

A related question that arises from the exact algorithm community is the existence of a tight bound of the number of minimal dominating sets in a graph. From \cite{FominGPS08} we know that the number of
minimal dominating sets in an $n$-vertex graph is bounded by $O(1.7159^n)$ and the best known lower bound is $15^{n/6}$. For several graph classes, including some subclasses of chordal graphs, tight
bounds were obtained  \cite{CouturierHHK13,GolovachHKKV14b}. Finding a tight upper bound for chordal graphs is still open. 

\bibliographystyle{plain}
\bibliography{bib}

\begin{thebibliography}{10}

\bibitem{Agrawal96}
Rakesh Agrawal, Heikki Mannila, Ramakrishnan Srikant, Hannu Toivonen, and
  A.~Inkeri Verkamo.
\newblock Fast discovery of association rules.
\newblock In {\em Advances in Knowledge Discovery and Data Mining}, pages
  307--328. AAAI/MIT Press, 1996.

\bibitem{AvisF96}
David Avis and Komei Fukuda.
\newblock Reverse search for enumeration.
\newblock {\em Discrete Applied Mathematics}, 65(1-3):21--46, 1996.

\bibitem{BorosEG06}
Endre Boros, Khaled~M. Elbassioni, and Vladimir Gurvich.
\newblock Transversal hypergraphs to perfect matchings in bipartite graphs:
  Characterization and generation algorithms.
\newblock {\em Journal of Graph Theory}, 53(3):209--232, 2006.

\bibitem{BorosGKM00}
Endre Boros, Vladimir Gurvich, Leonid Khachiyan, and Kazuhisa Makino.
\newblock Dual-bounded generating problems: Partial and multiple transversals
  of a hypergraph.
\newblock {\em SIAM J. Comput.}, 30(6):2036--2050, 2000.

\bibitem{BG00}
Endre Boros, Vladimir Gurvich, Leonid Khachiyan, and Kazuhisa Makino.
\newblock Generating weighted transversals of a hypergraph.
\newblock In {\em RUTGERS UNIVERSITY}, pages 13--22, 2000.

\bibitem{Courcelle09}
Bruno Courcelle.
\newblock Linear delay enumeration and monadic second-order logic.
\newblock {\em Discrete Applied Mathematics}, 157(12):2675--2700, 2009.

\bibitem{CouturierHHK13}
Jean-Fran\c{c}ois Couturier, Pinar Heggernes, Pim van~'t Hof, and Dieter
  Kratsch.
\newblock Minimal dominating sets in graph classes: Combinatorial bounds and
  enumeration.
\newblock {\em Theor. Comput. Sci.}, 487:82--94, 2013.

\bibitem{Diestel2005}
Reinhard Diestel.
\newblock {\em {Graph Theory (Graduate Texts in Mathematics)}}.
\newblock Springer, 2005.

\bibitem{Dirac61}
G.A. Dirac.
\newblock On rigid circuit graphs.
\newblock {\em Abhandlungen Aus Dem Mathematischen Seminare der Universit{\"a}t
  Hamburg}, 25(1-2):71--76, 1961.

\bibitem{EG95}
Thomas Eiter and Georg Gottlob.
\newblock Identifying the minimal transversals of a hypergraph and related
  problems.
\newblock {\em SIAM J. Comput.}, 24(6):1278--1304, 1995.

\bibitem{EGM03}
Thomas Eiter, Georg Gottlob, and Kazuhisa Makino.
\newblock New results on monotone dualization and generating hypergraph
  transversals.
\newblock {\em SIAM J. Comput.}, 32(2):514--537, 2003.

\bibitem{FominGPS08}
Fedor~V. Fomin, Fabrizio Grandoni, Artem~V. Pyatkin, and Alexey~A. Stepanov.
\newblock Combinatorial bounds via measure and conquer: Bounding minimal
  dominating sets and applications.
\newblock {\em ACM Transactions on Algorithms}, 5(1), 2008.

\bibitem{FominHKPV14}
Fedor~V. Fomin, Pinar Heggernes, Dieter Kratsch, Charis Papadopoulos, and Yngve
  Villanger.
\newblock Enumerating minimal subset feedback vertex sets.
\newblock {\em Algorithmica}, 69(1):216--231, 2014.

\bibitem{GalinierHP95}
Philippe Galinier, Michel Habib, and Christophe Paul.
\newblock Chordal graphs and their clique graphs.
\newblock In Manfred Nagl, editor, {\em WG}, volume 1017 of {\em Lecture Notes
  in Computer Science}, pages 358--371. Springer, 1995.

\bibitem{Gavril1974}
Fanica Gavril.
\newblock The intersection graphs of subtrees in trees are exactly the chordal
  graphs.
\newblock {\em Journal of Combinatorial Theory, Series B}, 16(1):47 -- 56,
  1974.

\bibitem{GolovachHKKV14a}
Petr~A. Golovach, Pinar Heggernes, Mamadou~M. Kanté, Dieter Kratsch, and Yngve
  Villanger.
\newblock Enumerating minimal dominating sets in chordal bipartite graphs.
\newblock Submitted, 2014.

\bibitem{GolovachHKKV14b}
Petr~A. Golovach, Pinar Heggernes, Mamadou~M. Kanté, Dieter Kratsch, and Yngve
  Villanger.
\newblock Minimal dominating sets in interval graphs and trees.
\newblock Submitted, 2014.

\bibitem{Gurvich73}
V.A. Gurvich.
\newblock On theory of multistep games.
\newblock {\em \{USSR\} Computational Mathematics and Mathematical Physics},
  13(6):143 -- 161, 1973.

\bibitem{KanteLMN11}
Mamadou~Moustapha Kant{\'e}, Vincent Limouzy, Arnaud Mary, and Lhouari Nourine.
\newblock Enumeration of minimal dominating sets and variants.
\newblock In Olaf Owe, Martin Steffen, and Jan~Arne Telle, editors, {\em FCT},
  volume 6914 of {\em Lecture Notes in Computer Science}, pages 298--309.
  Springer, 2011.

\bibitem{KanteLMN12}
Mamadou~Moustapha Kant{\'e}, Vincent Limouzy, Arnaud Mary, and Lhouari Nourine.
\newblock On the enumeration of minimal dominating sets and related notions.
\newblock Technical report, Clermont-Université, Université Blaise Pascal,
  LIMOS, CNRS, 2012.

\bibitem{KLMN12}
Mamadou~Moustapha Kant{\'e}, Vincent Limouzy, Arnaud Mary, and Lhouari Nourine.
\newblock On the neighbourhood helly of some graph classes and applications to
  the enumeration of minimal dominating sets.
\newblock In {\em ISAAC}, pages 289--298, 2012.

\bibitem{KanteLMNU13}
Mamadou~Moustapha Kant{\'e}, Vincent Limouzy, Arnaud Mary, Lhouari Nourine, and
  Takeaki Uno.
\newblock On the enumeration and counting of minimal dominating sets in
  interval and permutation graphs.
\newblock In Leizhen Cai, Siu-Wing Cheng, and Tak~Wah Lam, editors, {\em
  ISAAC}, volume 8283 of {\em Lecture Notes in Computer Science}, pages
  339--349. Springer, 2013.

\bibitem{KanteLMNU14}
Mamadou~Moustapha Kant{\'e}, Vincent Limouzy, Arnaud Mary, Lhouari Nourine, and
  Takeaki Uno.
\newblock Polynomial delay algorithm for listing minimal edge dominating sets
  in graphs.
\newblock {\em CoRR}, abs/1404.3501, 2014.

\bibitem{KhachiyanBBEG08b}
Leonid Khachiyan, Endre Boros, Konrad Borys, Khaled~M. Elbassioni, and Vladimir
  Gurvich.
\newblock Generating all vertices of a polyhedron is hard.
\newblock {\em Discrete {\&} Computational Geometry}, 39(1-3):174--190, 2008.

\bibitem{KhachiyanBBEGM08a}
Leonid Khachiyan, Endre Boros, Konrad Borys, Khaled~M. Elbassioni, Vladimir
  Gurvich, and Kazuhisa Makino.
\newblock Generating cut conjunctions in graphs and related problems.
\newblock {\em Algorithmica}, 51(3):239--263, 2008.

\bibitem{KhachiyanBEG08}
Leonid Khachiyan, Endre Boros, Khaled~M. Elbassioni, and Vladimir Gurvich.
\newblock On enumerating minimal dicuts and strongly connected subgraphs.
\newblock {\em Algorithmica}, 50(1):159--172, 2008.

\bibitem{lawler80}
Eugene~L. Lawler, Jan~Karel Lenstra, and A.~H. G.~Rinnooy Kan.
\newblock Generating all maximal independent sets: Np-hardness and
  polynomial-time algorithms.
\newblock {\em SIAM J. Comput.}, 9(3):558--565, 1980.

\bibitem{Mary13}
Arnaud Mary.
\newblock {\em \'Enumération des Dominants Minimaux d'un graphe}.
\newblock PhD thesis, Universit\'e Blaise Pascal, 2013.

\bibitem{ramamurthy90}
K.G. Ramamurthy.
\newblock {\em Coherent Structures and Simple Games}.
\newblock Theory and decision library, Game theory, mathematical programming,
  and operations research: Series C. Springer, 1990.

\bibitem{SchwikowskiS02}
Benno Schwikowski and Ewald Speckenmeyer.
\newblock On enumerating all minimal solutions of feedback problems.
\newblock {\em Discrete Applied Mathematics}, 117(1-3):253--265, 2002.

\bibitem{Strozecki10}
Yann Strozecki.
\newblock {\em Enumeration Complexity and Matroid Decomposition}.
\newblock PhD thesis, Universit\'e Paris Diderot - Paris 7, 2010.

\bibitem{Tarjan73}
Robert~Endre Tarjan.
\newblock Enumeration of the elementary circuits of a directed graph.
\newblock {\em SIAM J. Comput.}, 2(3):211--216, 1973.

\bibitem{ullman89}
J.D. Ullman.
\newblock {\em Principles of Database and Knowledge-Base Systems}.
\newblock Number vol.~1 in Principles of Computer Science Series. Computer
  Science Press, 1989.

\end{thebibliography}

\end{document}